\numberwithin{equation}{section}
\newtheorem{theorem}{Theorem}
\newtheorem{proposition}[theorem]{Proposition}
\newtheorem{example}[theorem]{Example}
\newtheorem{lemma}[theorem]{Lemma}
\newtheorem{definition}[theorem]{Definition}
\newcommand{\C}{\mathbb{C}}
\renewcommand{\P}{\mathbb{P}}
\DeclareMathOperator{\rank}{rank}
\DeclareMathOperator{\reg}{reg}
\DeclareMathOperator{\jac}{jac}
\DeclareMathOperator{\bideg}{bideg}
\newcommand{\N}{\mathbb{N}}
\newcommand{\Q}{\mathbb{Q}}
\newcommand{\conormal}{\mathcal N}
\tikzset{%
  >=latex, 
}
\def\a {\ensuremath{\mathbf{a}}}
\def\reg {\ensuremath{\mathrm{reg}}}
\def\sing {\ensuremath{\mathrm{sing}}}
\def\grad {\ensuremath{\nabla}}
\def\f{{\mathbf{f}}}
\def\g{{\mathbf{g}}}
\def\h{{\mathbf{h}}}
\def\aff{{{\mathsf{aff}}}}
\def\balpha{\mbox{\boldmath$\alpha$}}
\def\bpsi{\mbox{\boldmath$\psi$}}
\def\bphi{\mbox{\boldmath$\phi$}}
\begin{document}

\title{Critical Point Computations on Smooth Varieties:\\Degree and Complexity
bounds}

\author{Mohab Safey El Din}
\affil{Sorbonne Universit{\'e}s, UPMC Univ Paris 06, 7606, LIP6,
    F-75005, Paris, France\\
CNRS, UMR 7606, LIP6, F-75005, Paris, France\\
Inria, Paris Center, PolSys Project}
\author{Pierre-Jean Spaenlehauer}
\affil{Inria, CNRS, Universit\'e de Lorraine\\
Nancy, France}

\date{}

\maketitle

\begin{abstract} Let $V\subset \C^n$ be an equidimensional algebraic
  set and $g$ be an $n$-variate polynomial with rational coefficients.
  Computing the critical points of the map that evaluates $g$ at the
  points of $V$ is a cornerstone of several algorithms in real
  algebraic geometry and optimization.
  Under the assumption that the critical locus is finite and that the
  projective closure of $V$ is smooth, we provide sharp upper bounds
  on the degree of the critical locus which depend only on $\deg(g)$
  and the degrees of the generic polar varieties associated to $V$.
  Hence, in some special cases where the degrees of the generic polar
  varieties do not reach the worst-case bounds, this implies that the
  number of critical points of the evaluation map of $g$ is less than
  the currently known degree bounds.
  We show that, given a lifting fiber of $V$, a slight variant of an
  algorithm due to Bank, Giusti, Heintz, Lecerf, Matera and Solern\'o
  computes these critical points in time which is quadratic
  in this bound up to logarithmic factors, linear in the complexity of
  evaluating the input system and polynomial in the number of
  variables and the maximum degree of the input polynomials.
\end{abstract}


%
%

\section{Introduction}

\textbf{Problem statement.}  Let
$\f=(f_1,\ldots, f_p)\subset \Q[X_1, \ldots, X_n]$ be a polynomial
system defining a smooth and equidimensional algebraic set
$V\subset\C^n$ of dimension $d$ and $g\in\Q[X_1,\ldots,X_n]$ be a
polynomial of degree $D$. We focus on the complexity of computing the
critical points of the map evaluating $g$ at the points of $V$. These
critical points are defined by $f_1=\cdots=f_p=0$ and by the simultaneous
vanishing of the $(n-d+1)$-minors of the jacobian matrix $\jac(\f, g)$
{\[ \left [
\begin{array}{ccc}
\frac{\partial f_1}{\partial X_1}  &\cdots &\frac{\partial f_1}{\partial X_n} \\
  \vdots & & \vdots \\
\frac{\partial f_p}{\partial X_1}  &\cdots &\frac{\partial f_p}{\partial X_n} \\
\frac{\partial g}{\partial X_1}  &\cdots &\frac{\partial g}{\partial X_n} \\
\end{array}
\right ].
\]} Usually, in symbolic computation, when the critical locus is
finite, we aim at computing a rational parametrization of it, which is
the data $((q, v_1, \ldots, v_n), \lambda)$ where $q$ and the $v_i$'s
lie in $\Q[T]$ ($T$ is a new variable) and $\lambda$ is a linear form
in $X_1, \ldots, X_n$ with $q$ square-free, $\deg(v_i)< \deg(q)$,
$\lambda(v_1, \ldots, v_n)=T\frac{\partial q}{\partial T}\mod q$ and
the set defined by
$$q(\tau)=0, \qquad X_i =\frac{v_i}{{\partial q}/{\partial T}}(\tau)
\quad 1\leq i \leq n$$
coincides with the critical locus under consideration. Observe that
the degree of $q$ coincides with the number of critical points and the
number of rational numbers required in such a rational parametrization
is $O(n\deg(q))$.

Assuming again the critical locus to be finite, several bounds on its
cardinality have been established (see \cite{nie2009algebraic} and
references therein). These bounds depend on $n$, $p$ and the degrees
of $f_1, \ldots, f_p$ and $g$. However, it has been remarked that when
$V$ is not a complete intersection, or when it has some special
properties, the cardinality of the critical locus may be far less than
these bounds and sometimes depends only on $D$ and on some quantities
attached to geometric objects. These latter objects are {\em polar
  varieties} (see \cite{polar97,bank2010geometry}); they may be
understood as the critical loci of the restriction to $V$ of
projections on generic linear subspaces ; we define them further
precisely. 

Assuming the smoothness of the projective closure of $V$ and the
finiteness of the critical locus under study, this paper adresses the
following topical questions. 
\begin{itemize}
\item Provide a bound on the number of complex critical points depending
  on $D$ and on the degrees of the generic polar varieties associated
  to $V$.
\item Find an algorithm computing a rational parametrization of this
  critical locus within an arithmetic complexity which is essentially
  \emph{quadratic} in the obtained bound and polynomial in $p$, $n$, the
  complexity of evaluating $\f,g$ and
  $\max_{1\leq i\leq p}(\deg(f_i))$.
\end{itemize}

\textbf{Motivations and prior works.}  Since local extrema of the
evaluation map of $g$ are reached at critical points, computing
critical points is a basic and useful task for polynomial optimization
(see \emph{e.g.} \cite{GS14,greuet2011deciding,el2008computing}). Because of their topological properties related
to Morse theory, computing critical points is also a subroutine for
many modern algorithms in real algebraic geometry yielding
asymptotically optimal complexities (see \emph{e.g.}
\cite{grigor1988solving,renegar1992computational, heintz1994description} 
and \cite{basu2006algorithms} for a textbook reference on this family
of algorithms). Polar varieties have been introduced in \cite{polar97}
for 
computing sample points in each connected component of a real
algebraic set and this technique has been developed in
\cite{safey2003polar,bank2005generalized};
they are also used for computing roadmaps for deciding connectivity
queries \cite{SaSc11, SaSc13, basu2014baby, BR14}, for computing the
real dimension of a real 
algebraic set (see \cite{bannwarth2015probabilistic} and references
therein) 
or for variant quantifier elimination \cite{hong2012variant}.

Some bounds on the cardinality of the critical locus under
consideration are given in \cite{nie2009algebraic} when
$f_1,\ldots, f_p$ is a regular sequence. These bounds depend on the
degrees of the $f_i$'s, $D$, $n$ and $p$.

Since polynomial systems appearing in applications arise most of the
time with a special structure, a natural question to ask for is to
identify situations where the cardinality of the considered critical
locus is less than what the worst case bounds predicted in
\cite{nie2009algebraic}.

Such situations have been exhibited in \cite{catanese2006maximum}
where critical points are used in computational statistics via the
notion of ML degree.  When $\deg(g)=2$ the cardinality of the critical
locus is bounded by the \emph{generic ED degree} of $V$ which depends
only on the degrees of the generic polar varieties associated to $V$
\cite{draisma2015euclidean}.  These bounds do not require any
smoothness assumption.  The results on polar varieties in
~\cite{piene1978polar, holme1988geometric} play a central role in this
setting.

On the algorithmic side, many recent works have focused on the
complexity of computing critical loci.  The results in
\cite{faugere2012critical, spaenlehauer2014complexity} provide
complexity bounds for computing critical points using Gr\"obner bases
under genericity assumptions on the input polynomials $\f$. The
obtained complexity bounds are not quadratic in the generic number of
critical points and the genericity assumptions are not well-suited to
the situations we are willing to consider.  The results in
\cite{bank2013degeneracy} provide complexity bounds for a
probabilistic algorithm computing degeneracy loci in time quadratic in
an intrinsic quantity called the \emph{system degree}. This work is
strongly related to the algorithmic framework of the solver proposed
in \cite{giusti2001grobner} and to computational aspects of polar
varieties which have been deeply investigated in the last decades, see
\cite{polar97,bank2005generalized,bank2010geometry} and references
therein. We will use a slight variant of \cite{bank2013degeneracy} for
our algorithmic contribution.

\textbf{Main results.}  Under some smoothness assumptions which are
precised below, we prove a bound on the number of complex critical
points of the map $x \in V\to g(x)$ depending on the degree of $g$ and
integers $\delta_1(V),\ldots, \delta_{d+1}(V)$. The number
$\delta_{i+1}(V)$ is the degree of the polar variety of $V$ associated
to a generic linear projection on $\C^{i+1}$.  In the sequel, given
$a=(a_1, \ldots, a_i)\subset \C^{ni}$, we denote by $W((g,a), V)$ the
critical locus of the map
$x \in V\to (g(x), a_1\cdot{} x, \ldots, a_i\cdot{} x)$. We also
denote by $\C[X_1,\ldots, X_n]_{\leq D}$ the set of polynomials in
$\C[X_1,\ldots, X_n]$ of total degree $\leq D$.

\begin{theorem}\label{thm:degreeModPolar}\label{THM:DEGREEMODPOLAR}
  Let $V\subset\C^n$ be a $d$-equidimensional algebraic set whose
  projective closure is smooth and $D\geq 1$ and
  $i\in\{0,\ldots, \penalty-1000 d\}$.
  There exists a Zariski dense subset
  $\Omega_i\subset \C[X_1,\ldots, X_n]_{\leq D}\times \C^{i\times n}$
  such that for any $(g,a)\in\Omega$, the degree of
  $W((g,a), V)\subset\C^n$ is bounded by
$$\deg(W((g, a),V))\leq \begin{cases}
\delta_{i+1}(V) \text{ if $D=1$}\\
\sum_{j=i}^{d} \delta_{j+1}(V) (D-1)^{j-i} \text{ otherwise}.
\end{cases}
$$
\end{theorem}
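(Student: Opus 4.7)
My approach would be to express $W((g,a),V)$ as a Thom--Porteous degeneracy locus on the smooth projective closure $\bar V \subset \P^n$, compute its degree by a Chern-class calculation, and recognize the resulting expression as a combination of the polar degrees $\delta_{j+1}(V)$ using Piene's formulas \cite{piene1978polar,holme1988geometric}.

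First I would reformulate the critical condition. Under smoothness of $\bar V$, at a smooth point $x \in V$ the rows of $\jac(\f)(x)$ span $N_xV$, so $x$ lies in $W((g,a),V)$ iff the $i + 1$ restricted $1$-forms $dg|_{T_xV}, da_1|_{T_xV}, \ldots, da_i|_{T_xV}$ are linearly dependent in $T_x^*V$. Passing to $\bar V$ via the Euler sequence, $dg$ defines a section of $\Omega^1_{\bar V}(D)$ (since $g$ has degree $D$) and each $da_j$ a section of $\Omega^1_{\bar V}(1)$ (since $a_j$ is linear). Dualizing, these sections assemble into a bundle map
$$\mathcal O_{\bar V}(-D) \oplus \mathcal O_{\bar V}(-1)^{i} \longrightarrow \Omega^1_{\bar V},$$
and $W((g,a),V)$ is exactly the locus where this map has rank $\leq i$.

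For $(g,a)$ in a Zariski-open subset $\Omega_i$ (produced by Bertini applied to the universal family), the degeneracy locus has the expected codimension $d - i$, and Thom--Porteous identifies its class as $c_{d-i}\bigl(\Omega^1_{\bar V} - \mathcal O(-D) - \mathcal O(-1)^i\bigr)$. The key algebraic manipulation is the power-series identity
$$\frac{1}{1-DH} \;=\; \sum_{m \geq 0} (D-1)^m \,\frac{H^m}{(1-H)^{m+1}},$$
which rewrites the total-Chern generating function as
$$\frac{c(\Omega^1_{\bar V})}{(1-DH)(1-H)^{i}} \;=\; \sum_{m \geq 0} (D-1)^m H^m \cdot \frac{c(\Omega^1_{\bar V})}{(1-H)^{m+i+1}}.$$
Extracting the degree-$(d-i)$ component, capping with $H^i$ to compute the degree in $\P^n$, and observing that for each $m$ the term $\bigl[\,c(\Omega^1_{\bar V})/(1-H)^{m+i+1}\,\bigr]_{d-i-m} \cdot H^{m+i}$ is by Piene's formula the class whose degree over $\bar V$ equals the polar degree $\delta_{m+i+1}(V)$ (it is exactly the classical Porteous class for the polar variety attached to $m+i+1$ generic sections of $\Omega^1_{\bar V}(1)$), one obtains
$$\deg W((g,a),V) \;\leq\; \sum_{m=0}^{d-i} \delta_{m+i+1}(V)\,(D-1)^m \;=\; \sum_{j=i}^{d} \delta_{j+1}(V)(D-1)^{j-i}.$$
When $D = 1$ all terms with $m \geq 1$ vanish and only the single contribution $\delta_{i+1}(V)$ survives, as required.

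The main obstacle is the verification of the Thom--Porteous formula in this mixed-twist setting and the justification of the genericity assumptions producing $\Omega_i$. Two specific issues warrant care: (i) showing that for generic $(g,a)$ the degeneracy locus is reduced of the expected dimension $0$, which requires a Bertini-type transversality statement for the universal family of pairs of sections of $\Omega^1_{\bar V}(D)$ and $\Omega^1_{\bar V}(1)^i$; and (ii) the theorem asserts only an inequality, which accommodates the possibility that, even within $\Omega_i$, the scheme-theoretic Porteous count overshoots the affine cardinality, for instance by picking up contributions from the hyperplane at infinity of $\bar V \setminus V$. The first issue is handled by a standard universal-family argument; the second is a mild weakening justified by the smoothness of $\bar V$, which restricts such boundary contributions and is well-suited to the paper's algorithmic goals.
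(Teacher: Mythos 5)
There is a genuine gap, and it sits at the very first step of your bundle-theoretic setup: the differential of an \emph{affine} polynomial of degree $D$ does not define a section of $\Omega^1_{\bar V}(D)$, and the differential of an affine linear form does not define a section of $\Omega^1_{\bar V}(1)$. Already for $\bar V=\P^n$ one has $H^0(\P^n,\Omega^1_{\P^n}(1))=0$, so there is no nonzero section to speak of; in general, $d(G/X_0^D)$ extends only via the form $X_0\,dG-D\,G\,dX_0$, which is a section of $\Omega^1_{\bar V}(D+1)$ (equivalently one must work with the jet bundle $P^1(\mathcal O_{\bar V}(D))$, which is what Piene's polar-class formulas actually use, through $P^1(\mathcal O(1))$). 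This is not a harmless bookkeeping issue, because the final number depends on the twist. Your claimed identification $\deg\bigl(\bigl[c(\Omega^1_{\bar V})/(1-H)^{k}\bigr]_{d+1-k}\cdot H^{k-1}\bigr)=\delta_k(V)$ is not Piene's formula and is false: for $V=\C^n$ (so $\bar V=\P^n$, $\delta_k=0$ for $k\le n$, $\delta_{n+1}=1$) it evaluates to $(-1)^{n+1-k}$. Concretely, for $V=\C^1$, $i=0$ your chain of equalities yields $c_1(\Omega^1_{\P^1}-\mathcal O(-D))=D-2$, whereas a generic degree-$D$ polynomial has $D-1$ critical points, so the computation as written does not even produce a valid upper bound.

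If you repair the twists (use $\mathcal O(-D-1)\oplus\mathcal O(-2)^i\to\Omega^1_{\bar V}$, or jets), Thom--Porteous does give an upper bound, but it now includes genuine contributions supported on $\bar V\setminus V$: for $V=\C^2$, $i=0$ the corrected Porteous number is $D^2-D+1$, strictly larger than the theorem's bound $(D-1)^2$, and the excess is exactly the $D$ zeros of $X_0\,dG-DG\,dX_0$ on the line at infinity. So your point (ii) --- that smoothness of $\bar V$ ``restricts'' boundary contributions --- is where the real work lies: removing the part at infinity, not merely bounding it, is the crux, and it is precisely what the paper's argument accomplishes by working with the conormal variety $\conormal_V\subset\C^{n+1}\times\C^{n+1}$ and the biprojective variety $S_i(g,a)$: the bidegree of $\conormal_V$ is $\sum_k\delta_{k+1}(V)T^{n-k}U^{k+1}$, the bidegree of $S_i(g,a)$ is computed in Lemma~\ref{lemma:Si}, generic transversality away from $\mathscr X\cup\mathscr Y$ is established by an algebraic Thom weak transversality argument with explicit atlases (Proposition~\ref{prop:transversality}), and the bound is read off a single coefficient ($T^{n-i-1}U^{n}$) of the product of bidegrees, which automatically discards the unwanted strata. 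Your overall strategy (critical locus as a rank-defect condition, intersection theory, polar degrees) is in the right spirit, and your power-series identity is correct, but as it stands the bundle/twist identification and the appeal to Piene's formula do not hold, and the boundary-at-infinity problem is not resolved.
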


One of the central ingredient of the proof is an algebraic version of Thom's weak transversality
Theorem. We use a formalism and notation similar to \cite[Sec 4.2]{SaSc13}
which provides a proof of this result using charts and atlases.
We also show that this degree bound holds under milder assumptions on
$g$: it is sufficient to assume that the evaluation map of $g$ has
finitely many critical points on $V$. Let us see how such a bound
behaves on an example.

\begin{example}
Let $V\subset \C^8$ be the set of points $(x_1,\ldots, x_8)$ where the matrix
$$\begin{bmatrix} 1 & x_1 &x_2\\
  x_3&x_4&x_5\\
  x_6&x_7&x_8
\end{bmatrix}$$
has rank $1$. This variety has dimension $4$, degree $6$ and
$(\delta_1,\ldots, \delta_5)\penalty-1000 =(1,4,10,12,6)$. Consider
$g=\sum_{i=1}^8 i x_i^3$. Representing an open subset of $V$ as the
zero locus of a reduced regular sequence of quadratic polynomials
$f_1,\ldots, f_4$, bounds depending on the degrees of
$f_1,\ldots, f_4, g$ (see \emph{e.g.} \cite[Thm.
2.2]{nie2009algebraic}) would give the upper bound $2608$ for the
number of complex critical points. Theorem \ref{thm:degreeModPolar}
and its variant for non-generic objective functions (see
Prop. \ref{prop:nongeneric} below) yield the bound
$241=\delta_1 + 2 \delta_2 + 4\delta_3 + 8\delta_4+ 16\delta_5$.
Computations show that the evaluation map of $g$ restricted to $V$ has
actually exactly $241$ complex critical points on $V$.
\end{example}

A convenient representation of an equidimensional variety $V$ of
dimension $d$ is a \emph{lifting fiber} for $V$ (see
\cite{giusti2001grobner}). Roughly speaking, this lifting fiber
consists in a rational parametrization of the (finite) set of points
in a section of $V$ by a $(n-d)$-dimensional affine plane, together
with a \emph{lifting system} which allows to reconstruct a curve in
the variety by symbolic Newton-Hensel iteration.  Assuming that a
lifting fiber of $V$ has been precomputed and that $\deg(g)\geq 2$, we
use the algorithm proposed in \cite{bank2013degeneracy} to compute the
critical points. Since this algorithm handles the more general case of
quasi-affine varieties, so does the proposed variant. However, our main
complexity results hold only under the assumption that the projective closure of $V$
is smooth and that the evaluation map of $g$ has finitely many
critical points on $V$.  Our second main result is a proof that the
arithmetic complexity is quadratic up to logarithmic factors in the
degree bound from Theorem \ref{thm:degreeModPolar}, polynomial in $n$,
the maximum of the degrees of the lifting system, $\deg(g)$, and the
complexity of evaluating the lifting system and $g$.

\textbf{Organization of the paper.}  Section \ref{sec:prelim}
describes notation and preliminary results used throughout this
paper. Section \ref{sec:geometry} is devoted to the proof of
Theorem~\ref{thm:degreeModPolar}.  It relies on a transversality
result which is proved in Section~\ref{sec:transversality}. Section
\ref{sec:nongeneric} deals with nongeneric objective
functions. Finally, Section \ref{sec:algo} discusses algorithmic
aspects and complexity bounds.

\textbf{Acknowledgments.} {Mohab Safey El Din is member of and
  supported by Institut Universitaire de France.}


\section{Preliminaries}
\label{sec:prelim}
\subsection{Notation and basic definitions} 

We refer to \cite{Shafa} and \cite{Eis13} for basic definitions about
algebraic sets and polynomial ideals. Given an algebraic set
$V\subset \C^n$, we denote by $I(V)$ the ideal associated to $V$.
Given $\f=(f_1, \ldots, f_p)$ in $\Q[X_1, \ldots, X_n]$, the set of
their common solutions in $\C^n$ is denoted by $Z(\f)$ and the ideal
generated by $\f$ is denoted by $\langle \f\rangle$.
We say that $\f=(f_1, \ldots, f_p)$ is a reduced sequence when the
ideal $\langle \f\rangle$ generated by $\f$ is radical.

{\bf Tangent spaces, regular and singular points.}  Let
$V\subset \C^n$ be an algebraic set. For $x \in V$, the tangent space
$T_x V$ at $x$ to $V$ is the vector space defined by
$\sum_{i=1}^n\frac{\partial f}{\partial X_i}(x) Y_i=0$ for any
$f \in I(V)$. Also, given a finite set of generators
$\f=(f_1, \ldots, f_p)$ of $I(V)$, $T_x V$ is the kernel of the
jacobian matrix
$\jac(\f)=\left ( \frac{\partial f_i}{\partial X_j}\right )_{\substack{1\leq i
\leq p\\ 1\leq i \leq n}}$. We denote by $N_x V$ the orthogonal complement to $T_x V$.

Assume now that $V$ is $d$-equidimensional.  The set of points
$x\in V$ where $\dim(T_x V)=d$ is the set of regular points of $V$; we
denote it by $\reg(V)$. The subset of singular points $\sing(V)$ is
the complement of $\reg(V)$ in $V$; it has dimension less than $d$.
Observe that given a finite set of generators $\f$ of $I(V)$,
$\jac(\f)$ has rank $n-d$ at all $x \in \reg(V)$. Also, $N_x V$ is
generated by the gradient vectors of the polynomials in $\f$ evaluated
at $x$.
An equidimensional algebraic set $V$ is said to be smooth when
$\sing(V)$ is empty.

{\bf Zariski topology.} The Zariski topology over $\C^n$ is the
topology for which the closed sets are the algebraic sets of
$\C^n$. Let $f\in \C[X_1, \ldots, X_n]$; we denote by
${\cal O}(f)\subset \C^n$ the subset defined by $f\neq 0$; it is a
Zariski open set, which is non-empty when $f$ is not identically
$0$. 
Further, we will prove some properties depending on parameters that
are generically chosen. That means that, in the parameter space,
there exists a non-empty Zariski open set such that the property is satisfied for any choice of
the parameter values in this set.

{\bf Projective varieties.}  We will consider algebraic sets in the
projective space $\P^{n}(\C)$ defined by homogeneous polynomials. In
the sequel, we use the shorthand $\P^n$ for $\P^n(\C)$.  

Let $V\subset \P^n$ be a projective variety. Notions of dimension,
tangent space and regular (resp. singular) spaces extend to projective
varieties. We denote by $\aff(V)\subset \C^{n+1}$ the Zariski closure
of the set $\{(x_0,\ldots, x_n)\subset \C^{n+1}\mid
(x_0:\cdots:x_n)\in V\}$.
The variety $\aff(V)$ is an \emph{affine cone} (for all $x\in \aff(V)$, $\lambda \in \C$
we have $\lambda x \in \aff(V)$).  
By a slight abuse of notation, when
$V$ is an algebraic set of $\C^n$, we also denote by $\aff(V)$ the
affine cone of the projective closure of $V$.
Let now $V'\subset \C^{n+1}$ be an affine cone. Observe that the map
${\sf proj}: (x_0, \ldots, x_n)\in V'\setminus\{\mathbf{0}\}\to
(x_0:\cdots:x_n)\in \P^n$
sends $V'\setminus\{\mathbf{0}\}$ to a projective set. Besides, for a
projective variety $V\subset \P^n$, ${\sf proj}(\aff(V))=V$.
We also consider bi-projective varieties lying in $\P^n \times\P^n$.
The above constructions extend similarly: to any variety
$V\subset\P^n\times\P^n$ can be associated a cone
$\aff(V)\subset \C^{n+1}\times \C^{n+1}$ which is the Zariski closure
of the set of points $(x_0,\ldots, x_n, y_0, \ldots, y_n)$ such that
$((x_0 : \cdots :x_n),(y_0:\cdots:y_n)) \in V$. The map ${\sf proj}$ is
extended in the following way:
${\sf proj}: (x,y)\in (\C^{n+1}\setminus\{\mathbf{0}\})\times
(\C^{n+1}\setminus\{\mathbf 0\})\to (x,y)\in \P^n\times \P^n$.

\subsection{Atlases and transversality}\label{ssec:transverse}

Let $V\subset \C^n$ be a $d$-equidimensional algebraic set of
codimension $c$ and $S\subset V$ be a subset.  Following the
terminology in \cite[Chap. 5]{SaSc13}, an atlas for $(V, S)$ is a
finite sequence $\bpsi=((\h_j, m_j))_{1\leq j \leq \ell}$, with
$\h_j=(h_{j,1}, \ldots, h_{j, c})\subset \C[X_1, \ldots, X_n]$ and
$m_j\in \C[X_1, \ldots,\penalty-1000 X_n]$ such that for all $1\leq j \leq \ell$
the following holds:
\begin{itemize}
  \item[${\sf P}_1$]
  ${\cal O}(m_j)\cap (V\setminus S)={\cal O}(m_j)\cap
  (Z(\h_j)\setminus S)$;
\item[${\sf P}_2$] ${\cal O}(m_j)\cap (V\setminus S)$ is not empty;
\item[${\sf P}_3$] for all $x \in {\cal O}(m_j)\cap V\setminus S$,
  $\jac(\h_j)$ has full rank $c$ at $x$;
\item[${\sf P}_4$] the open sets ${\cal O}(m_j)$ cover $V\setminus S$.
\end{itemize}
We say that $\h_j$ is a set of local equations over ${\cal O}(m_j)$.
\cite[Lemma 5.2.4]{SaSc13} establishes that there exists an atlas for
$(V, \sing(V))$. Also, observe that
$\sing(V)\subset Z(m_1\cdots m_\ell)$.

Further, we use the notion of transverse intersection for algebraic
sets and projective varieties.  Let $V$ and $W$ be equidimensional
algebraic sets in $\C^n$. As in \cite[pp. 21]{EisenbudHarris}, we say
that $V$ and $W$ intersect transversely at $x$ if
$x \in \reg(V)\cap \reg(W)$ and $T_x V+T_x W = \C^n$.  They intersect
generically transversely if they meet transversely at a generic point
of each irreducible component of $V\cap W$. This definition is
naturally extended to projective varieties.

We say that two sets $V$ and $W$ intersect transversely over an open
set $U$ if $V$ and $W$ intersect tranversely at any point
   of $V\cap W\cap U$.

\begin{lemma}\label{lemma:transverse}
  Let $V_1$ and $V_2$ be equidimensional algebraic sets of
  codimensions $c_1$ and $c_2$.  Consider atlases
  $\balpha_1=((\h_j, m_j))_{1\leq j \leq \ell}$ and
  $\balpha_2=((\g_j, n_j))_{1\leq j \leq k}$ for $(V_1, \sing(V_1))$
  and $(V_2, \sing(V_2))$.  Assume that $V_1\cap V_2$ is either empty
  or that for any irreducible component $Z$ of $V_1\cap V_2$, there
  exist $r\in\{1,\ldots, \ell\}$ and $s\in\{1,\ldots, k\}$ such that
  \begin{itemize}
  \item[${\sf T}_1$] 
    $Z\cap {\cal O}(m_{r}n_{s})$ is not empty;
  \item[${\sf T_2}$] At any point of $\reg(Z)\cap {\cal O}(m_{r}n_{s})$,
    the matrix $\jac(\h_r, \g_s)$ has rank $c_1+c_2$.
  \end{itemize}
  Then $V_1$ and $V_2$ intersect generically transversely. 
\end{lemma}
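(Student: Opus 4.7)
The plan is to fix an arbitrary irreducible component $Z$ of $V_1\cap V_2$, produce a non-empty Zariski open subset $U\subset Z$ on which transversality can be checked pointwise, and then verify the transversality condition at every point of $U$ by computing the intersection of tangent spaces as the kernel of the combined Jacobian $\jac(\h_r,\g_s)$.

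First, I would apply the hypothesis to $Z$: it gives indices $r\in\{1,\dots,\ell\}$ and $s\in\{1,\dots,k\}$ such that $Z\cap\mathcal{O}(m_rn_s)\neq\emptyset$, and such that $\jac(\h_r,\g_s)$ has rank $c_1+c_2$ at every point of $\reg(Z)\cap\mathcal{O}(m_rn_s)$. Since $\reg(Z)$ is Zariski open and dense in $Z$, and $\mathcal{O}(m_rn_s)$ is Zariski open, the set $U:=\reg(Z)\cap\mathcal{O}(m_rn_s)$ is a non-empty Zariski open subset of $Z$. Proving that $V_1$ and $V_2$ meet transversely at every point of $U$ suffices, since this then holds at a generic point of $Z$, and $Z$ was an arbitrary component of $V_1\cap V_2$.

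Next, I would show that any $x\in U$ is a regular point of both $V_1$ and $V_2$. Using the observation made right after the definition of an atlas, $\sing(V_1)\subset Z(m_1\cdots m_\ell)$ and $\sing(V_2)\subset Z(n_1\cdots n_k)$; since $m_r(x)n_s(x)\neq 0$, we deduce $x\in\reg(V_1)\cap\reg(V_2)$. Then property ${\sf P}_1$ applied to each atlas gives that $V_1$ and $Z(\h_r)$ coincide on $\mathcal{O}(m_r)$ away from $\sing(V_1)$, and likewise for $V_2$ and $Z(\g_s)$ on $\mathcal{O}(n_s)$. Combined with property ${\sf P}_3$, this identifies the tangent spaces as $T_xV_1=\ker\jac(\h_r)(x)$ and $T_xV_2=\ker\jac(\g_s)(x)$.

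Finally, the intersection satisfies $T_xV_1\cap T_xV_2=\ker\jac(\h_r,\g_s)(x)$, and by hypothesis ${\sf T}_2$ the latter has dimension $n-(c_1+c_2)$. The Grassmann identity then gives
$$\dim(T_xV_1+T_xV_2)=(n-c_1)+(n-c_2)-(n-c_1-c_2)=n,$$
so $T_xV_1+T_xV_2=\C^n$, which is precisely the transversality condition at $x$. Since this holds for every $x\in U$, a dense open subset of $Z$, the varieties $V_1$ and $V_2$ intersect transversely at a generic point of each irreducible component of $V_1\cap V_2$. The proof is essentially a bookkeeping exercise through the atlas axioms, and the only subtle point is invoking the inclusion $\sing(V_i)\subset Z(\prod_j m_j)$ (and its analogue for $V_2$) to guarantee that points in $\mathcal{O}(m_rn_s)$ are automatically smooth on both ambient varieties.
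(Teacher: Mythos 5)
Your proof is correct and follows essentially the same line as the paper's. The only stylistic difference is that you compute $T_x V_1 \cap T_x V_2 = \ker\jac_x(\h_r,\g_s)$ and apply Grassmann's identity, whereas the paper dually observes $N_x V_1 \cap N_x V_2 = 0$ (from additivity of the ranks) and passes to orthogonal complements to get $T_x V_1 + T_x V_2 = \C^n$; this is the same dimension count viewed from the normal rather than the tangent side, and both rest on the identification $T_x V_i = \ker\jac_x(\cdot)$ supplied by ${\sf P}_1$ and ${\sf P}_3$ together with the regularity of $x$ on $V_1$ and $V_2$ over the chart, exactly as you argue.
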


\begin{proof}
  The equality
  $\rank(\jac(\h_r,\g_s))=\rank(\jac(\h_r))+\rank(\jac(\g_s))$ implies
  that at any point $x\in \reg(Z)\cap {\cal O}(m_{r}n_{s})$,
  $N_x V_1\cap N_x V_2 = 0$. Consequently,
  $T_x V_1 + T_x V_2 =(N_x V_1\cap N_x V_2)^\perp = \C^n$. Finally,
  noticing that $\reg(Z)\cap {\cal O}(m_{r}n_{s})$ is dense in
  $Z\cap {\cal O}(m_{r}n_{s})$, which is dense in $Z$ (by
  ${\sf T}_1$) concludes the proof.
\end{proof}

We also need to prove that the intersection of bi-projective varieties
is transverse. This is done via their associated affine cones.
In the sequel, the set $\{(x, y)\mid
  x=\mathbf{0}\}\subset\C^{n+1}\times\C^{n+1}$ is denoted by
$\mathscr{X}$ and the set $\{(x, y)\mid y=\mathbf{0}\}\subset\C^{n+1}\times\C^{n+1}$ is denoted by
$\mathscr{Y}$.

\begin{lemma}\label{lemma:XY} Let $V_1$ and $V_2$ be projective varieties in
  $\P^n\times\P^n$.  Then $V_1$ and $V_2$ intersect transversely
  at every point $(x,y)=((x_0:\ldots: x_n),(y_0:\ldots: y_n))\in\P^n\times\P^n$
  iff $\aff(V_1)$ and $\aff(V_2)$ intersect transversely over
  $\C^{n+1}\times\C^{n+1}\setminus (\mathscr{X}\cup \mathscr{Y})$.
\end{lemma}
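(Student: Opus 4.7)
The plan is to reduce the lemma to a pointwise statement, then relate the tangent space of the affine cone to that of the projective variety via the differential of ${\sf proj}$, and finally apply a linear algebra identity. Fix $(X,Y)\in\C^{n+1}\times\C^{n+1}\setminus(\mathscr{X}\cup\mathscr{Y})$, i.e.\ $X\neq\mathbf 0$ and $Y\neq\mathbf 0$, and set $(x,y)={\sf proj}(X,Y)$. Since transversality is pointwise in both the projective and affine formulations, it suffices to show that $(X,Y)$ is a regular transverse intersection point of $\aff(V_1)$ and $\aff(V_2)$ iff $(x,y)$ is a regular transverse intersection point of $V_1$ and $V_2$.

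The first step is to compute $d{\sf proj}_{(X,Y)}\colon\C^{n+1}\times\C^{n+1}\to T_{(x,y)}(\P^n\times\P^n)$. Since ${\sf proj}$ is a submersion from $(\C^{n+1}\setminus\{\mathbf 0\})\times(\C^{n+1}\setminus\{\mathbf 0\})$ onto $\P^n\times\P^n$, this differential is surjective, and a direct computation (for instance in an affine chart) shows that its kernel is $\C X\times \C Y$, the tangent space to the $(\C^*)^2$-orbit through $(X,Y)$.

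The second step is the key identification
\begin{equation*}
T_{(X,Y)}\aff(V_i)=d{\sf proj}_{(X,Y)}^{-1}\bigl(T_{(x,y)}V_i\bigr).
\end{equation*}
The inclusion ``$\subseteq$'' comes from the fact that ${\sf proj}$ maps $\aff(V_i)\setminus(\mathscr{X}\cup\mathscr{Y})$ into $V_i$, so its differential sends $T_{(X,Y)}\aff(V_i)$ into $T_{(x,y)}V_i$. The reverse inclusion uses cone invariance of $\aff(V_i)$: since $\aff(V_i)$ is stable under the $(\C^*)^2$-scaling, its tangent space at $(X,Y)$ contains $\ker d{\sf proj}_{(X,Y)}=\C X\times\C Y$. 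A dimension count ($\dim\aff(V_i)=\dim V_i+2$) then shows both sides have the same dimension, giving equality. In particular, $(X,Y)\in\reg(\aff(V_i))$ iff $(x,y)\in\reg(V_i)$.

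The third step assembles the equivalence. For any linear surjection $\pi\colon E\to F$ and subspaces $A,B\subseteq F$, one has $\pi^{-1}(A)+\pi^{-1}(B)=\pi^{-1}(A+B)$ (the nontrivial inclusion follows by lifting a decomposition $a+b$ and absorbing the kernel into either preimage, since $\ker\pi\subseteq\pi^{-1}(A)$ and $\ker\pi\subseteq\pi^{-1}(B)$). Applying this with $\pi=d{\sf proj}_{(X,Y)}$, $A=T_{(x,y)}V_1$, $B=T_{(x,y)}V_2$ yields
\begin{equation*}
T_{(X,Y)}\aff(V_1)+T_{(X,Y)}\aff(V_2)=d{\sf proj}_{(X,Y)}^{-1}\bigl(T_{(x,y)}V_1+T_{(x,y)}V_2\bigr),
\end{equation*}
and by surjectivity of $d{\sf proj}_{(X,Y)}$ the left side equals $\C^{n+1}\times\C^{n+1}$ iff $T_{(x,y)}V_1+T_{(x,y)}V_2=T_{(x,y)}(\P^n\times\P^n)$. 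Combined with the regularity equivalence from step two, this gives the lemma.

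The main obstacle is the second step: identifying $T_{(X,Y)}\aff(V_i)$ with the full preimage $d{\sf proj}_{(X,Y)}^{-1}(T_{(x,y)}V_i)$, rather than just a subspace of it. The containment requires the cone property of $\aff(V_i)$ (to ensure that the radial scaling directions $\C X\times\C Y$, which constitute $\ker d{\sf proj}_{(X,Y)}$, actually lie in the tangent space of the cone), and equality then follows from a dimension count. Once this is in place, the rest is straightforward linear algebra.
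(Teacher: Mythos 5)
Your overall strategy is sound and is essentially a coordinate-free version of the paper's chart argument, but there is a gap in your second step, and it is exactly the step your reduction leans on. The dimension count you invoke only proves the identity $T_{(X,Y)}\aff(V_i)=d{\sf proj}_{(X,Y)}^{-1}\bigl(T_{(x,y)}V_i\bigr)$ when $(x,y)\in\reg(V_i)$: in that case the right-hand side has dimension $\dim V_i+2$, and the chain $\dim V_i+2\leq\dim T_{(X,Y)}\aff(V_i)\leq\dim T_{(x,y)}V_i+2$ forces equality. If $(x,y)$ is a priori possibly singular, $\dim T_{(x,y)}V_i$ may exceed $\dim V_i$ and nothing forces the two sides to agree. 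Consequently your claim ``$(X,Y)\in\reg(\aff(V_i))$ iff $(x,y)\in\reg(V_i)$'' is only justified in the direction ``$(x,y)$ regular $\Rightarrow$ $(X,Y)$ regular''. But the converse is precisely what the reverse implication of the lemma needs: assuming the cones meet transversely off $\mathscr{X}\cup\mathscr{Y}$, you know $(X,Y)\in\reg(\aff(V_i))$ and must conclude $(x,y)\in\reg(V_i)$ (the paper's definition of transversality at a point requires regularity there). From $\dim T_{(X,Y)}\aff(V_i)=\dim V_i+2$, the kernel containment and the inclusion $d{\sf proj}_{(X,Y)}\bigl(T_{(X,Y)}\aff(V_i)\bigr)\subseteq T_{(x,y)}V_i$ only give $\dim T_{(x,y)}V_i\geq\dim V_i$, which is vacuous; so as written the reverse direction of the lemma is not proved. (Your forward direction, and the transfer of the spanning condition via $\pi^{-1}(A)+\pi^{-1}(B)=\pi^{-1}(A+B)$, are fine.)

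The fix is short and is in effect what the paper's proof does: over the chart $U=\{x_0\neq 0,\ y_0\neq 0\}$ the map $\bigl((u,v),\lambda,\mu\bigr)\mapsto\bigl(\lambda(1,u),\mu(1,v)\bigr)$ is an isomorphism of $U\times\C^*\times\C^*$ onto $\{x_0\neq0\}\times\{y_0\neq0\}$ carrying $(V_i\cap U)\times\C^*\times\C^*$ onto $\aff(V_i)\cap\{x_0\neq0,\ y_0\neq0\}$ (off the axes the cone is exactly ${\sf proj}^{-1}(V_i)$). Since isomorphisms preserve Zariski tangent spaces, this yields $T_{(X,Y)}\aff(V_i)=d{\sf proj}_{(X,Y)}^{-1}\bigl(T_{(x,y)}V_i\bigr)$ at \emph{every} point off $\mathscr{X}\cup\mathscr{Y}$, hence the regularity equivalence in both directions; the paper obtains the same thing by identifying $V_\ell\cap U$ with $\aff(V_\ell)\cap H_1\cap H_2$ and noting that $T\aff(V_\ell)$ is constant along the $(\C^*)^2$-orbits. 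With that replacement for your step two, the rest of your argument goes through.
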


\begin{proof}
  Let $i,j$ be such that $x_i\ne 0$ and $y_j\ne 0$. W.l.o.g., we
  assume that $i=j=0$. Consider the affine chart
  $U\subset\P^n\times \P^n$ defined by $x_0\ne 0, y_0\ne 0$. Let
  $H_1\subset\C^{n+1}\times\C^{n+1}$ (resp. $H_2$) be the hyperplane
  defined by $x_0=1$ (resp. $y_0=1$). For $\ell\in\{1,2\}$, the
  variety $V_\ell\cap U$ can be identified to
  $\aff(V_\ell)\cap H_1\cap H_2$. By definition of transversality, the
  varieties $V_1$ and $V_2$ intersect transversely at
  $(x,y)\in\P^n\times\P^n$ if and only if so do $V_1\cap U$ and
  $V_2\cap U$. By the previous identification, this is equivalent to
  saying that $\aff(V_1)\cap H_1\cap H_2$ and
  $\aff(V_2)\cap H_1\cap H_2$ intersect transversely at
  $(1, x_1,\ldots, x_n, 1, y_1,\ldots, y_n)$. Finally, direct tangent
  space computations show that for $z_1,z_2$ in $\C\setminus \{0\}$
  and for $\ell\in\{1,2\}$,
  $T_{(z_1, z_1 x_1 ,\ldots, z_1 x_n, z_2, z_2 y_1,\ldots, z_2 y_n)}
  \aff(V_\ell) = T_{(1, x_1 ,\ldots, x_n,1,y_1,\ldots y_n)}
  \aff(V_\ell)$.
\end{proof}
\subsection{Critical points and modified polar varieties}

Let $V\subset \C^n$ be an equidimensional algebraic set of
codimension $c$ and $g\in \Q[X_1, \ldots, X_n]$. Consider the
evaluation map $\varphi_g: x \in V \to g(x)$. We denote by
$w(\varphi_g, V)$ the set
$\{x \in \reg(V)\mid {\rm rank}(\jac_x(\mathbf{f}, g))< c+1\}$.  This
is a locally closed constructible set and it coincides with the
critical locus of the map $\varphi_g$.  Its Zariski closure is denoted
by $W(\varphi_g, V)$. This construction can be generalized as follows.

Let $a_1, \ldots, a_n$ be linearly independent vectors in $\C^{n}$ and
for $1\leq i \leq n$, set $\a_i = (a_1, \ldots, a_i)\in\C^{i\times n}$. 
Then,
for $1\leq i \leq n$, let $W( (g,\mathbf a_i), V)$ denote the algebraic
set
$$\{x \in V\mid {\rm rank}(\jac_x(\mathbf{f}, g,\varphi_{\mathbf a_i}))<
c+i+1\},$$
where $\mathbf f$ is a set of generators of $I(V)$, and
$\varphi_{\mathbf a_i}$ is the set of linear forms
$(a_j\cdot X)_{j\in\{1,\ldots, i\}}$ (with $X=(X_1,\ldots, X_n)$).
Reusing the terminology of \cite{GS14}, we call these
sets {\em modified polar varieties} associated to $g$ and $V$, the $i$-th one being
$W((g, \a_i), V)$.
We let $W(\mathbf a_i, V)$
be the classical polar variety
$\{x \in V\mid {\rm rank}(\jac_x(\mathbf{f}, \varphi_{\a_{i}})<
c+i\}$, reusing the letter $W$ for the sake of simplicity.

\begin{proposition}\label{prop:genpolvar}
  Let $V$ be a $d$-equidimensional algebraic set, and $i\in\{1,\ldots, d\}$.  There exists a
  Zariski dense subset $\mathscr{O}\subset \C^{i\,n}$ and an integer
  integer numbers $\delta_i$ such that for any
  $a\in \mathscr{O}$, the following holds:
  \begin{itemize}
  \item $W(a, V)$ is either empty or equidimensional of dimension
    $i-1$;
  \item $W(a, V)$ has degree at most $\delta_i$.
  \end{itemize}
\end{proposition}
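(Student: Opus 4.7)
The plan is to realize the family $\{W(\a, V)\}_{\a\in\C^{in}}$ as fibers of an incidence variety over the parameter space, control the generic fiber dimension via a Chevalley-type argument, and obtain the uniform degree bound from a generic flatness argument on an appropriate projective compactification.

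First, I would introduce the incidence variety over the regular locus,
$$\mathcal{I} \;=\; \bigl\{(\a, x)\in \C^{in}\times\reg(V) \;\bigm|\; \rank\bigl(\jac_x(\f, \varphi_{\a})\bigr) < c+i\bigr\},$$
and analyze it via its second projection $\pi_2:\mathcal{I}\to \reg(V)$. For fixed $x\in\reg(V)$, the rank of $\jac_x(\f)$ is already $c$, and the jacobian $\jac_x(\f,\varphi_{\a})$ drops rank iff the $i$ linear forms $a_1|_{T_x V},\ldots, a_i|_{T_x V}$ are linearly dependent on the $d$-dimensional space $T_xV$. This is a determinantal condition on an $i\times d$ matrix, cutting out an irreducible subvariety of codimension $d-i+1$ in $\C^{in}$. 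Thus every fiber of $\pi_2$ has the same dimension $in-(d-i+1)$, so $\dim \mathcal{I} = d + in - d + i -1 = in + i - 1$.

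Next, I would apply the theorem on dimensions of fibers to the first projection $\pi_1:\mathcal{I}\to\C^{in}$: its image is constructible, and the standard upper semicontinuity of fiber dimension produces a Zariski dense open set $\mathscr{O}_1\subset\C^{in}$ over which every non-empty fiber has the expected dimension $\dim\mathcal{I}-in = i-1$. Since the constructible set $\pi_1^{-1}(\a) = W(\a,V)\cap\reg(V)$ is Zariski-dense in (the regular-locus part of) $W(\a,V)$, we deduce that for $\a\in\mathscr{O}_1$ the variety $W(\a,V)$ is either empty or equidimensional of dimension $i-1$, the possible contribution from $\sing(V)$ being absorbed since $\dim\sing(V)<d$ and does not create components of dimension exceeding $i-1$ beyond a proper closed subset of parameters that we discard.

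Finally, for the degree bound I would compactify. Embed $\C^n$ into $\P^n$ and let $\bar V$ be the projective closure of $V$; define a projective incidence variety $\bar{\mathcal{I}}\subset\P(\C^{in})\times\bar V$ by taking the projective closure of $\mathcal{I}$ (using homogenizations of the minors of the jacobian, possibly adding a boundary component at infinity). The first projection $\bar\pi_1:\bar{\mathcal{I}}\to\P(\C^{in})$ is proper, so by generic flatness there is a Zariski dense open $U\subset \P(\C^{in})$ over which $\bar\pi_1$ is flat; on $U$ the Hilbert polynomial of the fibers is constant, and in particular the degree of the generic fiber takes a single integer value $\delta_i$. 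Pulling $U$ back to a Zariski dense open $\mathscr{O}_2\subset\C^{in}$ and setting $\mathscr{O}=\mathscr{O}_1\cap\mathscr{O}_2$, the fibers of $\pi_1$ over $\mathscr{O}$ have the desired dimension and a degree bounded by $\delta_i$.

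The main obstacle is the compactification step: the defining minors of $\jac(\f,\varphi_{\a})$ are not naturally homogeneous in $(\a,x)$ simultaneously in a single grading, so the compactification $\bar{\mathcal{I}}$ must be set up carefully (for instance using the cone over $\bar V$ in $\C^{n+1}$, as in the transversality formalism of Section~\ref{ssec:transverse}) to guarantee that the extended first projection is proper and that flatness can be invoked without introducing spurious components at infinity that would pollute the degree count on $\mathscr{O}$.
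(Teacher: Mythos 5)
Your route is necessarily different from the paper's, since the paper disposes of this proposition by citation alone (the equidimensionality from \cite[Prop.~3]{bank2013degeneracy}, the degree bound from the definition of $\delta_{classic}$ in \cite[Sec.~4]{bank2010geometry}); your incidence-variety computation for $\pi_2$ is correct (the condition at a fixed $x\in\reg(V)$ is indeed that the $i\times d$ matrix of restrictions of $a_1,\ldots,a_i$ to $T_xV$ drops rank, a codimension $d-i+1$ determinantal condition, giving $\dim\mathcal I=in+i-1$), and the generic-flatness argument for the degree can be made to work (flatness itself needs no properness, and generically the boundary $\bar{\mathcal I}\setminus\mathcal I$ contributes only fibers of dimension at most $i-2$, so it does not pollute the leading term). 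However, two steps have genuine gaps.

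First, equidimensionality. Upper semicontinuity of fiber dimension for $\pi_1$ only gives a dense open set of parameters over which every component of the fiber has dimension \emph{at most} $i-1$; it does not produce ``the expected dimension $i-1$'' for each component. You are missing the complementary lower bound: locally on $\reg(V)$, $W(\a,V)$ is the locus where the $(c+i)\times n$ matrix $\jac(\f,\varphi_{\a})$ has rank at most $c+i-1$, so by the Macaulay/Eagon--Northcott bound on the codimension of determinantal loci every irreducible component has codimension at most $d-i+1$ in $\reg(V)$, hence dimension at least $i-1$. Without this ingredient your argument proves only $\dim\bigl(W(\a,V)\cap\reg(V)\bigr)\le i-1$ for generic $\a$. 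Second, the treatment of $\sing(V)$ is wrong as written: with the paper's definition $W(\a,V)=\{x\in V\mid \rank\jac_x(\f,\varphi_{\a})<c+i\}$, every singular point of $V$ lies in $W(\a,V)$ for \emph{every} $\a$, because $\jac_x(\f)$ already has rank less than $c$ there; so no shrinking of the parameter set can ``absorb'' this contribution, and if $\dim\sing(V)\ge i$ the equidimensionality claim fails outright for that definition. The correct reading (consistent with the references the paper cites, and with the smoothness hypotheses in force elsewhere in the paper) is to define the polar variety as the closure of the rank-defective locus inside $\reg(V)$, or to assume $V$ smooth; you should state one of these explicitly rather than appeal to $\dim\sing(V)<d$.
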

\begin{proof}
  The first statement follows directly from \cite[Prop.3]{bank2013degeneracy}.
  For the second statement, we refer to the definition of $\delta_{classic}$ in
  \cite[Sec. 4]{bank2010geometry}.
\end{proof}
The integers $\delta_i$ are denoted by $\delta_i(V)$ in the sequel. By
convention, we set $\delta_{d+1} = \deg(V)$. These numbers are also
called \emph{projective characters} of $V$ (see \cite[Example
14.3.3]{fulton2012intersection}).


\section{Proof of Theorem~\ref{thm:degreeModPolar}}
\label{sec:geometry}
We start by introducing some objects which play a central role in the
proof. As before, $V$ is a $d$-equidimensional algebraic set and
$\aff(V)$ denotes the affine cone over the projective closure of $V$.

Let $\conormal_{V}\subset \C^{n+1}\times\C^{n+1}$ be the Zariski
closure of the set
\[\{(x, y)\in \C^{n+1}\times\C^{n+1}\mid x\in
  \aff(V)\setminus\{\mathbf 0\},\, y\in N_x \aff(V)\setminus\{\mathbf
  0\}\}.\] It is called the conormal variety of $\aff(V)$.
  Consider $a=(a_1, \ldots,\penalty-1000 a_i)\in \C^{(n+1)i}$, a homogeneous
polynomial ${g}\in \C[X_0, X_1, \ldots, X_n]$ of degree $D$ and the matrix 
{\begin{equation*}\Sigma_i(g,a)=\begin{bmatrix}
      Y_0&\cdots & Y_n \\
\partial {g}/\partial X_0 & \cdots&\partial {g}/\partial X_n\\
\rule[2pt]{1cm}{.5pt}& a_1 & \rule[2pt]{1cm}{.5pt}\\
&\vdots&\\
\rule[1pt]{1cm}{.5pt} & a_{i}& \rule[2pt]{1cm}{.5pt}\\
\end{bmatrix}.\end{equation*}} Let
$S_i({g},a)\subset\C^{n+1}\times\C^{n+1}$ be the variety defined by
the rank condition $\rank(\Sigma_i({g},a))\leq i+1$.
Let $\Pi$ be the projection
\begin{equation}\begin{array}{cccl}
  \label{eq:pi}
\Pi:&(x,y)\in \C^{n+1}\times\C^{n+1}&\rightarrow&x\in\C^{n+1}.\\
    \end{array}\end{equation}      
  If $a$ is generic, then $\conormal_V\cap S_i({g},a)$ is the
  Zariski closure of set of points
  $(x,y)$ such that $y\in N_x \aff(V)$, and
  $(y_0,\ldots, y_n)\in {\rm Span}(a)+\grad_x {g}$. 
  In other words, $(x_0, x_1,\ldots, x_n)$ is a critical point of the
  map
  $(X_0,\ldots, X_n)\mapsto ({g}(X), a_1\cdot X, \ldots, a_i\cdot X)$.
  Let $a=(a_1',\ldots, a_{i-1}')\in\C^{n(i-1)}$ be a basis of the
  vector space
  $\{(u_1,\ldots,u_n)\in\C^n \mid (0, u_1,\ldots, u_n)\in {\rm
    Span}(a_1,\ldots, a_i)\}$.
  Therefore, if the first coordinate of $a_1$ is non\-zero, then the
  restriction of $\Pi(\conormal_{V}\cap S_i({g},a))$ to the chart
  $x_0=1$ is the modified polar variety $W( (g_{\mid x_0=1}, a'), V)$.
  The set of homogeneous polynomials in $\C[X_0, \ldots, X_n]$ of
  degree $D$ is a finite dimensional vector space; we denote by $N$
  its dimension and identify those homogeneous polynomials to points
  in $\C^N$.  Assume for the moment the following result which is
  proved in Section~\ref{sec:transversality}.
  \begin{proposition}\label{prop:transversality}\label{PROP:TRANSVERSALITY}
    Let $V\subset \C^n$ be a $d$-equi\-dimen\-sional algebraic set
    such that its projective closure is smooth and $i\in\{0,\ldots,
    \penalty-1000 d\}$.
    There exists a non-empty Zariski open set
    $\mathscr{O}\subset \C^{(n+1)i}\times \C^N$ such that for any
    $(g,a)\in \mathscr{O}$, $\conormal_{V}$ and $S_i(g,a)$ meet
    generically transversely over
    $(\C^{n+1}\setminus\{\mathbf{0}\})\times(\C^{n+1}\setminus\{\mathbf{0}\})$.
  \end{proposition}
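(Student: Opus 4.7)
The plan is to apply an algebraic Thom weak transversality argument to an incidence correspondence parametrized by $(g,a)$, and to combine it with Lemma~\ref{lemma:transverse} to deduce generic transversality of $\conormal_V$ and $S_i(g,a)$ over the open set $(\C^{n+1}\setminus\{\mathbf{0}\})^2$.

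First, I would construct an atlas $\balpha_1=((\h_j,m_j))_j$ for $(\conormal_V,\sing(\conormal_V))$. Since the projective closure of $V$ is smooth, the cone $\aff(V)\setminus\{\mathbf{0}\}$ is smooth of pure codimension $c=n-d$; on each affine chart $\aff(V)$ is cut out by a regular sequence $(h_{j,1},\ldots,h_{j,c})$ with $\jac(\h_j)$ of rank $c$. The conormal is locally defined by $\h_j$ together with the vanishing of the $(c+1)$-minors of the matrix obtained by adjoining the row $(Y_0,\ldots,Y_n)$ to $\jac_X(\h_j)$; localizing at a nonzero $c\times c$ minor of $\jac(\h_j)$ (encoded in $m_j$) collapses these minors to $n+1-c$ linear equations in $y$ expressing that $y$ lies in the row span of $\jac_X(\h_j)$. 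Similarly, on charts where $\grad g(x),a_1,\ldots,a_i$ are linearly independent, $S_i(g,a)$ is cut out locally by $n-i$ of the $(i+2)$-minors of $\Sigma_i(g,a)$ involving the $Y$-row, yielding an atlas $\balpha_2$ depending on $(g,a)$.

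Next, consider the linearized incidence variety
\[
\widetilde{\mathcal{E}}=\bigl\{((x,y),(g,a),(\lambda_0,\ldots,\lambda_i))\mid (x,y)\in\conormal_V\cap(\C^{n+1}\setminus\{\mathbf{0}\})^2,\ y=\lambda_0\grad g(x)+\textstyle\sum_k\lambda_k a_k\bigr\}.
\]
Because the second condition is \emph{linear} in $(y,g,a,\lambda)$ once $x$ is fixed, and $\conormal_V$ is smooth over the target open set, $\widetilde{\mathcal{E}}$ is smooth of constant fiber dimension over that open set. An algebraic version of Thom's weak transversality, applied to the projection $\widetilde{\mathcal{E}}\to\C^N\times\C^{(n+1)i}$, then yields a non-empty Zariski open set $\mathscr{O}$ for which the fiber is smooth of the expected dimension. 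Descending back to the image $\mathcal{E}\subset(\C^{n+1}\setminus\{\mathbf{0}\})^2\times\C^N\times\C^{(n+1)i}$ of $\widetilde{\mathcal{E}}$ and combining with the atlases $\balpha_1,\balpha_2$, Lemma~\ref{lemma:transverse} delivers the required generic transversality.

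The main obstacle is the surjectivity step underlying Thom's transversality: one must show that at every point of the relevant open locus of $\widetilde{\mathcal{E}}$, the differential with respect to the parameters $(g,a)$ of the defining map is surjective onto $\C^{n+1}$. This boils down to two linear-algebra facts: at a fixed nonzero $x$, perturbing $g$ inside the space of degree-$D$ polynomials sweeps $\grad g(x)$ through all of $\C^{n+1}$ (already true for $D=1$, hence for $D\ge 1$), and the $a_k$'s can be perturbed arbitrarily in $\C^{n+1}$. Once this surjectivity is in place, generic smoothness of the projection and the atlas-based application of Lemma~\ref{lemma:transverse} proceed by standard arguments.
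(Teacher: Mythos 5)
Your overall architecture is the same as the paper's: an atlas for $\conormal_V$, local equations for $S_i(g,a)$ viewed with $(g,a)$ as parameters, an algebraic Thom weak transversality argument applied over the parameter space, and Lemma~\ref{lemma:transverse} to convert submersivity of the combined local equations into generic transversality. However, there is a genuine coverage gap in your local model of $S_i(g,a)$. Your charts, and your incidence correspondence $\widetilde{\mathcal{E}}$ (which only parametrizes points with $y\in\Span(\grad g(x),a_1,\ldots,a_i)$), exist only where $\grad g(x),a_1,\ldots,a_i$ are linearly independent. But $S_i(g,a)$ is the rank-$\le i+1$ locus of the full $(i+2)\times(n+1)$ matrix, so it also contains every point $(x,y)$ with $\grad g(x)\in\Span(a_1,\ldots,a_i)$ and $y$ arbitrary; for $i\ge n-d$ this locus can meet $\conormal_V$ outside $\mathscr{X}\cup\mathscr{Y}$. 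At such points your argument says nothing, so to conclude \emph{generic} transversality you must either (i) prove, with its own genericity argument, that for generic $(g,a)$ no irreducible component of $\conormal_V\cap S_i(g,a)$ has its generic point in this locus (every component has dimension at least $i+1$ while this locus has dimension at most $i$ for generic $(g,a)$ --- but this has to be established), or (ii) add charts covering it, which is exactly what the paper does: its minors $\sigma_\ell$ are $(i+1)$-minors of the submatrices obtained by deleting either the gradient row or one of the rows $a_r$, hence always contain the $Y$-row, and the case analysis in Lemma~\ref{lemma:atlas:S} (using $y\neq\mathbf{0}$ and $\rank=i+1$) shows these charts cover every relevant point, including those with $\grad g(x)\in\Span(a)$. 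As written, your proof does not handle these points.

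A second, smaller flaw concerns the key surjectivity step. Your justification ``already true for $D=1$, hence for $D\ge 1$'' is not valid in this setting: the parameter space is the space of \emph{homogeneous} forms of degree exactly $D$, which contains no linear forms when $D\ge 2$, and for $D\ge 2$ all gradients vanish at $x=\mathbf{0}$. The fact you need --- that for fixed $x\neq\mathbf{0}$ the linear map $g\mapsto \grad g(x)$ on degree-$D$ forms is onto $\C^{n+1}$ --- is true but must be proved using $x\neq\mathbf{0}$, e.g.\ via the monomials $X_r^D$ and $X_sX_r^{D-1}$ for a coordinate $x_r\neq 0$, as in the paper's proof of Lemma~\ref{lemma:atlas:S}. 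Moreover, when the multiplier $\lambda_0$ vanishes (i.e.\ $y\in\Span(a)$) perturbing $g$ contributes nothing and you must instead use the perturbation of an $a_k$ with $\lambda_k\neq 0$; so the surjectivity requires the case analysis that you only gesture at. Both issues are fixable, but they must be written out (or the paper's choice of charts adopted) for the proof to be complete.
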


  One can associate to any equidimensional variety
  $Z\subset\P^n\times\P^n$ of codimension $c$ a bivariate homogenous
  polynomial $\bideg(Z)\in\N[T,U]$ of degree $c$, called the bidegree
  of $Z$ \cite{waerden29, waerden78}. The coefficient of $T^k U^{c-k}$
  in $\bideg(Z)$ is the number of points (counted with multiplicity)
  of $Z\cap (H_1\times \P^n)\cap (\P^n\times H_2)$ where $H_1$
  (resp. $H_2$) is a generic linear space of dimension $n-k$
  (resp. $n-c+k$). 

  By \cite[Sec. 5]{draisma2015euclidean}, the bidegree of
  $\conormal_{V}$ is $\sum_{k=0}^{d}\delta_{k+1}(V)T^{n-k} U^{k+1}$.

  We focus now on the bidegree of $S_i(g,a)$.  

  \begin{lemma}\label{lemma:Si}
    There exists a non-empty Zariski open set
    $\mathscr O'\subset \C^N\times \C^{(n+1)i}$ such that for
    $(g,a)\in \mathscr O'$,
    $S_i(g,a)\subset \C^{n+1}\times\C^{n+1}$
    has codimension $n-i$ and its
    bidegree is
    $\sum_{k=0}^{n-i} (D-1)^{k} T^{k} U^{n-k-i}.$ Moreover,
    $\reg(S_i(g,a))$ coincides with the set of points where the matrix
    $\Sigma_i(g,a)$ has rank $i+1$.
  \end{lemma}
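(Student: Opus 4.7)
The lemma has three parts---codimension, identification of the regular locus, and the bidegree formula---which I would establish in that order.

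\textbf{Codimension and regular locus.} For generic $a\in\C^{(n+1)i}$ the vectors $a_1,\ldots,a_i$ are linearly independent, spanning a fixed $i$-dimensional subspace $A\subset\C^{n+1}$. The lower $(i+1)\times(n+1)$ block of $\Sigma_i(g,a)$ (rows $\nabla g(x),a_1,\ldots,a_i$) has rank exactly $i+1$ off the set $B:=\{x:\nabla g(x)\in A\}$, which for generic $g$ has codimension $n+1-i$ in $\C^{n+1}$. On the open locus $(\C^{n+1}\setminus B)\times\C^{n+1}$, the condition $\rank\Sigma_i(g,a)\le i+1$ is equivalent to the $n-i$ independent linear equations $y\in\Span(\nabla g(x),A)$, so this part of $S_i(g,a)$ is pure of codimension $n-i$. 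The remaining piece $B\times\C^{n+1}$ lies in $S_i(g,a)$ but with codimension $n+1-i>n-i$, so it is contained in the closure of the main component. A direct Jacobian computation on the parametrization $(x,\mu,\lambda_1,\ldots,\lambda_i)\mapsto(x,\mu\nabla g(x)+\sum_k\lambda_k a_k)$ (generically injective on its $(n+i+2)$-dimensional source) shows that $S_i(g,a)$ is smooth precisely where $\rank\Sigma_i(g,a)=i+1$; combined with the standard fact that the singular locus of the generic rank-$\le r$ variety is the rank-$\le(r-1)$ stratum (pulled back transversely via $(x,y)\mapsto\Sigma_i(g,a)$ for generic $(g,a)$), this identifies $\reg(S_i(g,a))$ with the rank-$(i+1)$ stratum.

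\textbf{Bidegree via Porteous.} The entries of the $j$-th column of $\Sigma_i(g,a)$ have bidegrees $(0,1),(D-1,0),(0,0),\ldots,(0,0)$, so $\Sigma_i(g,a)$ can be viewed as a morphism $\phi:\mathcal{O}_{\P^n\times\P^n}^{\oplus(n+1)}\to F$ of vector bundles on $\P^n\times\P^n$ with $F:=\mathcal{O}(0,1)\oplus\mathcal{O}(D-1,0)\oplus\mathcal{O}^{\oplus i}$. The projective closure $\bar S_i$ is the $(i+1)$-th degeneracy locus of $\phi$, and by the first part it has the expected codimension $n-i$; Porteous's formula therefore gives the Chow-ring class
\[
[\bar S_i]=\det\bigl(c_{1+j-r}(F)\bigr)_{1\le r,j\le n-i}\in A^{n-i}(\P^n\times\P^n).
\]
Since $c(F)=(1+h_2)(1+(D-1)h_1)$ truncates at degree two (so $c_0=1$, $c_1=h_2+(D-1)h_1$, $c_2=(D-1)h_1h_2$, and $c_k=0$ for $k\ge 3$), the matrix above is tridiagonal. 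Expanding along the first column yields the recurrence $D_m=c_1D_{m-1}-c_2D_{m-2}$ with $D_0=1$, $D_1=c_1$; its closed-form solution, with roots $\alpha=h_2$ and $\beta=(D-1)h_1$ of the characteristic polynomial $t^2-c_1t+c_2$, is $D_m=\sum_{k=0}^{m}h_2^{m-k}((D-1)h_1)^k$. Setting $m=n-i$ gives $[\bar S_i]=\sum_{k=0}^{n-i}(D-1)^{k}h_1^{k}h_2^{n-i-k}$, which is the claimed bidegree under the identification $T=h_1$, $U=h_2$.

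The main obstacle is handling the genericity assumptions uniformly: one needs a Bertini-type argument over the parameter space $\C^{N}\times\C^{(n+1)i}$ to ensure that for generic $(g,a)$ the set $B$ has the expected codimension, that the degeneracy locus $\bar S_i$ has the expected codimension so that Porteous applies, and that the map $(x,y)\mapsto\Sigma_i(g,a)$ is transverse to the rank stratification. Each of these is routine but must be verified simultaneously to extract a single nonempty Zariski-open $\mathscr{O}'$ on which all three conclusions hold.
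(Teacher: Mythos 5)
Your approach is correct, but it takes a genuinely different route from the paper's. The paper's proof first makes a generic linear change of coordinates so that the bottom block $[a_1;\cdots;a_i]$ becomes $[\mathbf 0\mid I_i]$, reducing the rank condition on $\Sigma_i$ to a rank-$\leq 1$ condition on a $2\times(n{+}1{-}i)$ matrix $M$. It then reads the multidegree of the corresponding \emph{generic} bihomogeneous determinantal variety directly out of \cite[Example~15.39]{MilStu05}, invokes Cohen--Macaulayness of determinantal rings together with the specialization argument of \cite[Sec.~4]{FSS13} to show the multidegree persists for a Zariski-dense set of specializations, and finally checks nonemptiness of the resulting open set with the explicit witness $g=(X_0^D+\cdots+X_n^D)/D$ and $a_{j,k}=\delta_{j,k}$. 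Your proof instead treats $\Sigma_i$ as a morphism of vector bundles $\phi:\mathcal O^{\oplus(n+1)}\to\mathcal O(0,1)\oplus\mathcal O(D-1,0)\oplus\mathcal O^{\oplus i}$ on $\P^n\times\P^n$ and applies Thom--Porteous to the $(i{+}1)$-degeneracy locus; the tridiagonal Porteous determinant has a clean closed form because $c(F)=(1+h_2)(1+(D-1)h_1)$, and the computation matches the paper's answer. Both routes yield the same bidegree; Porteous is more intrinsic and packages the codimension and degree together, while the paper's route makes the genericity set and nonemptiness witness completely explicit. Your identification of $\reg(S_i(g,a))$ with the rank-$(i{+}1)$ stratum via transversal pullback of the determinantal stratification is the right statement and in fact matches the lemma's claim more transparently than the paper's sketch (which argues only via the vanishing of the first row of $M$, i.e.\ $y\in\Span(a)$, and leaves implicit the role of the second row).

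The one real soft spot is the one you flag yourself: you need to actually exhibit a nonempty Zariski-open set of $(g,a)$ on which (i) $B=\{x:\nabla g(x)\in A\}$ has codimension $n{+}1{-}i$, (ii) the degeneracy locus has expected codimension so Porteous applies, and (iii) the evaluation map is transverse to the rank stratification so that $\sing(S_i(g,a))$ is exactly the rank-$\leq i$ stratum. In particular, the set of $(h_0,\ldots,h_n)$ that arise as $(\partial g/\partial X_0,\ldots,\partial g/\partial X_n)$ is a \emph{proper} linear subspace of $\C[X_0,\ldots,X_n]_{D-1}^{\,n+1}$, so a Bertini argument over the full space of $(n{+}1)$-tuples does not by itself guarantee the intersection with this subspace is nonempty; you must still produce a single witness $g$ (as the paper does). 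Without that witness the statement ``there exists a nonempty Zariski open $\mathscr O'$'' is not yet established. This is a genuine step, not just bookkeeping, and your proposal defers it.
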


  \begin{proof}
    $S_i(g,a)$ is the variety of $(x,y)\in\C^{n+1}\times\C^{n+1}$
    where the evaluation of $\Sigma_i(g, a)$ is rank defective.  There
    exists a Zariski dense subset $\mathscr O_1\subset \C^{(n+1)i}$
    such that for all $a\in\mathscr O_1$, the top-left
    $i\times i$ submatrix of $A$ is invertible, where $A$ is the
    matrix with rows $a_1,\ldots, a_i$.  For
    $a\in\mathscr O_1$, let $B=(b_{i,j})$ be an invertible
    $(n+1)\times (n+1)$ matrix such that $A\cdot B = [ \mathbf 0 \mid I_i ]$.
    The rank condition on $A\cdot B$ shows that $S_i(g,a)$ is
    the set of points $(x,y)\in\C^{n+1}\times\C^{n+1}$ where the rank of
$$M = \begin{bmatrix}
  \sum_{j=1}^{n+1} b_{j,1} Y_j & \cdots&  \sum_{j=1}^{n+1}
  b_{j,n+1-i} Y_j\\
  \sum_{j=1}^{n+1} b_{j,1} \partial g/\partial X_j&   \cdots&
  \sum_{j=1}^{n+1}
  b_{j,n+1-i} \partial g/\partial X_j
\end{bmatrix}$$
is at most $1$, where $Y_1,\ldots, Y_{n+1-1}$ are new variables.
Next, let $S'_i\subset \P^{n-i}\times\P^{n-i}$ denote the determinantal variety of rank-defective matrices
$$\begin{bmatrix}
  \mathfrak u_{1,0}&\cdots&\mathfrak u_{1, n-i}\\
  \mathfrak u_{2,0}&\cdots&\mathfrak u_{2, n-i}
\end{bmatrix},
$$
together with the grading given by $\deg(\mathfrak u_{1,j})=1$,
$\deg(\mathfrak u_{2,j}) = D-1$ for all $j\in\{0,\ldots, n-i\}$.
Setting $\mathbf s = 0$, $\deg(t_1) = D-1$, $\deg(t_2) = 1$, in
\cite[Example 15.39]{MilStu05}, the multidegree of $S_i'$ is
$\sum_{k=0}^{n-i} (D-1)^{k} T^{k} U^{n-k-i},$ where $T$ (resp. $U$)
corresponds to the class of a hyperplane in the first (resp. second)
operand in the product $\P^{n-1}\times\P^{n-1}$. Let
$\C[X_0,\ldots, X_n]_{D-1}$ denote the set of homogeneous polynomials
of degree $D-1$. Since determinantal varieties are Cohen-Macaulay
\cite[Thm. 11]{northcott1960semi}, by the same argument as in
\cite[Sec. 4]{FSS13}, there exists a Zariski dense subset
$\mathcal O_2\subset \C[X_0,\ldots, X_n]_{D-1}^{n+1}\times
\C^{(n+1)i}$
such that for any $(h_0,\ldots, h_{n}, a)\in \mathscr{O}_2$, the
bidegree of the variety defined by $\rank(M)\leq 1$ also has bidegree
$\sum_{k=0}^{n-i} (D-1)^{k} T^{k} U^{n-k-i}$.  Note that the set of
$(h_0,\ldots, h_n)$ which are of the form
$(\partial g/\partial X_0,\ldots, \partial g/\partial X_n)$ is a
linear subspace of $\C[X_0,\ldots, \penalty-1000 X_n]$.  It remains to
prove that the restriction of $\mathscr{O}_2$ to this subspace is
nonempty. This is done by considering
$(h_0,\ldots, h_n) = (X_0^{D-1}, \ldots,\penalty-1000 X_n^{D-1})$
(which comes from the derivatives of $g=(X_0^D+\cdots+X_n^D)/D$) and
$a_{i,j}=1$ if $i=j$ and $0$ otherwise.  Direct computations show that
the corresponding variety has the expected bidegree.  Therefore the
open set $\mathcal O'$ of pairs $(g, a)$ such that
$(\partial g/\partial X_0,\ldots, \partial g/\partial X_n, a)\in
\mathscr{O}_2$
satisfies the desired properties.  Writing the equations defining the
variety of $S_i(g,a)$ from the rank of the matrix $M$ shows that
$(x,y)\in\sing(S_i(g,a))$ iff the evaluation of the first row of $M$
is zero, which is equivalent to saying that $(Y_0,\ldots, Y_n)$ lies
in ${\rm Span}(a)$. This implies that the regular locus of $S_i$ is
the set of points where $\Sigma_i(g,a)$ has rank $i+1$.
\end{proof}
By Proposition~\ref{prop:transversality}, there exists a non-empty
Zariski open set $\mathscr O\subset \C^N\times \C^{(n+1)(i+1)}$ such
that for $g,a'$ in $\mathscr O$, $\conormal_{V}$ and $S_{i+1}(g,a')$
meet generically transversely outside the set
$\mathscr{X}\cup\mathscr{Y}$ introduced before Lemma~\ref{lemma:XY}.
Consider the map ${\sf proj}$ introduced in Section~\ref{sec:prelim}
(paragraph on projective varieties).  We deduce that for
$(g,a')\in \mathscr O_2$, $\conormal'_V={\sf proj}(\conormal_{V})$ and
$S'_i(g,a')={\sf proj}(S_i(g,a'))$ meet generically transversely
(Lemma~\ref{lemma:XY}). Below, we take
$(g,a)\in \mathscr{O}\cap \mathscr{O}'$ (where $\mathscr{O}'$ is the
non-empty Zariski open set defined in Lemma~\ref{lemma:Si}).

Intersection theory \cite[Theorem Definition 1.7]{EisenbudHarris}
states that if two subvarieties $Z_1$ and $Z_2$ of $\P^n\times \P^n$
intersect generically transversely, then 
$$\bideg(Z_1\cap Z_2) = \bideg(Z_1)\cdot\bideg(Z_2) \bmod \langle T^{n+1},
U^{n+1}\rangle.$$
We deduce that $\bideg(\conormal'_{V}\cap S'_i(g,a))$ equals
{\begin{equation*}
  \begin{split}
    \left(\sum_{k=0}^{d}\delta_{k+1}(V)T^{n-k} U^{k+1}\right)\left(
      \sum_{k=0}^{n-i-1} (D-1)^{k} T^{k} U^{n-k-i-1}\right) \\\bmod
    \langle T^{n+1}, U^{n+1}\rangle.
  \end{split}    
  \end{equation*}}
  Note that the degree of the image of $S'_{i+1}(g,a')\cap\conormal'_{V}$ by
  the projection $\pi_1:(x,y)\mapsto x$ is the coefficient of
  $T^{n-i-1}U^n$ in its bidegree. Direct computations show that it
  equals
   $$
   \begin{cases}
     \delta_{i+1}(V) \text{ if $d=1$}\\
     \sum_{j=i}^{d} \delta_{j+1}(V) (D-1)^{j-i} \text{ otherwise}.
   \end{cases}
   $$

   For $j\in\{1,\ldots, i+1\}$, let $\nu_j$ be the first coefficient
   of $a_j'$ and let $\mathcal U$ be the set of $a'\in\C^{(n+1)(i+1)}$
   such that $\nu_1\ne 0$. Set
   $\tilde{\mathscr{O}} = \{(g,a')\in \mathscr{O}\cap\mathscr{O}' \mid
   a'\in\mathcal U\}$.
   For $a'\in\mathcal U$, let $\chi$ be the map sending $a'$ to
   $(a_2'-\nu_2 a_1'/\nu_1,\ldots, a_i'-\nu_i a_1'/\nu_1).$ The image
   of $\mathcal U$ by $\chi$ is a dense open subset
   $\mathcal U'\subset \C^{ni}$.  Finally, we write $\Omega$ for the
   set
   $(g_{\mid X_0=1}, a)\in\C[X_1,\ldots, X_n]_{\leq D}\times \C^{ni}$
   such that there exists $(g, a')\in \tilde{\mathscr O}$ with
   $\chi(a')=a$.  For $(g_{\mid X_0=1},a)\in\Omega$, $S_{i+1}'(g, a')$
   and $\conormal'_{V}$ intersect generically transversely. Moreover,
   its image by the projection $\Pi$ (see \eqref{eq:pi}) restricted to
   the chart $x_0=1$, $y_0=1$ is $W(g_{\mid X_0=1},a)$.  Consequently,
   \begin{eqnarray*}
     \deg(W(g_{\mid X_0=1},a))&\leq& \deg(\Pi(S_{i+1}'(g^h, a')\cap
   \conormal'_V))\\&=&
   \begin{cases}
     \delta_{i+1}(V) \text{\quad\quad\quad if $d=1$}\\
     \sum_{j=i}^{d} \delta_{j+1}(V) (D-1)^{j-i} \text{ otherwise}.
   \end{cases}
 \end{eqnarray*}

   \def\clos{{V^h}}
\def\Vaff{{V^{(a)}}}

\def\setS{{S^a}}
\def\X{{\mathbf{X}}}
\def\opensetS1{{\Omega_1}}
\section{Proof of Proposition~\ref{prop:transversality}}
\label{sec:transversality}
Our proof relies on applying Lemma~\ref{lemma:transverse} with
$V_1=\conormal_{V}$ and $V_2=S_i(g,a)$ for a generic choice of
$(a, g)$.  It simply consists in proving that properties ${\sf T}_1$
and ${\sf T}_2$ defined in Lemma~\ref{lemma:transverse} hold.
This leads us to define atlases and local equations for $\conormal_V$.
Next, we define an atlas (and hence local equations) for a set related
to $S_i(g,a)$. We will apply an algebraic version of Thom's weak
transversality Theorem to a well chosen map constructed using these
local equations, establishing that this map is regular at the
origin. Finally, we will use these results in the last paragraph of
this Section to prove properties ${\sf T}_1$ and ${\sf T}_2$ under
some genericity assumption on $(g,a)$.

\subsection{Local equations for $\conormal_V$}
By assumption, $V$ is $d$-equidi\-men\-sional and smooth as is its
projective closure; we denote by $c$ its codimension. This implies
that the affine cone $\aff(V)$ of the projective closure of $V$ is
also equidimensional of codimension $c$. Besides, if
$(x, y)\in \aff(V)$ with $x\neq 0$ then $x$ is a regular point of
$\aff(V)$. By \cite[Lemma 5.2.4]{SaSc13}, there exists an atlas
$\bpsi=((\h_j, m_j))_{1\leq j \leq J}$ for $(\aff(V), \sing(\aff(V)))$
(see Subsection~\ref{ssec:transverse}).  This leads us to define the
set
\[
U_{j}=\{(x, y)\mid x \in \aff(V)\cap {\cal O}(m_j), y\perp T_x
\aff(V)\setminus\{\mathbf{0}\}\}.
\]
Since the open sets ${\cal O}(m_j)$ cover $\aff(V)\setminus\sing(\aff(V))$
(property ${\sf P}_4$), the sets $U_j$ cover
$\conormal_V\setminus \mathscr{X}\cup \mathscr{Y}$.

Let $m'_{j,1}, \ldots, m'_{j,L_j}$ be the $c\times c$ minors of $\jac(\h_j)$
such that ${\cal O}(m_jm'_{j,k})\cap V\neq \emptyset$ for
$1\leq k \leq L_j$. For $1\leq r\leq n-c$, we denote by $M_{r,k}(m'_{j,k})$
the minor of the $(c+1, c+1)$ minors of the $(c+1, c+1)$ submatrix of
\[
J=\begin{bmatrix}
  \jac(\h_j) \\
  Y_0 \cdots Y_n \\
\end{bmatrix}
\]
whose upper left $(c\times c)$ minor is $m'_{j,k}$ and adding the
missing row and column. 
In the sequel, we denote by $\mathbf{H}_{j, k}$ the sequence
$\h_j, M_{1,k}(m'_{j,k}), \ldots, M_{n-c, k}(m'_{j,k})$.

\begin{lemma}\label{lemma:atlas:conormal}
  Under the above notation and assumptions, the sequence of couples
  $(\mathbf{H}_{j, k}, m_jm'_{j,k})$ for $1\leq j \leq J$ and
  $1\leq k\leq L_j$ is an atlas for
  $(\conormal_V, \sing(\conormal_V)\cup\mathscr{X}\cup \mathscr{Y})$.
\end{lemma}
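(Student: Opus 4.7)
The plan is to verify directly the four atlas axioms ${\sf P}_1$--${\sf P}_4$ from Subsection~\ref{ssec:transverse} for the sequence $(\mathbf{H}_{j,k}, m_j m'_{j,k})$, building on the already available atlas $\bpsi = ((\h_j, m_j))_j$ for $(\aff(V), \sing(\aff(V)))$. Covering ${\sf P}_4$ is immediate from $\bpsi$: for $(x,y) \in \conormal_V \setminus (\sing(\conormal_V) \cup \mathscr{X} \cup \mathscr{Y})$, the condition $x \neq \mathbf{0}$ together with smoothness of the projective closure of $V$ forces $x \in \reg(\aff(V))$, so ${\sf P}_4$ of $\bpsi$ furnishes some $j$ with $m_j(x) \neq 0$, and ${\sf P}_3$ of $\bpsi$ guarantees that some $c \times c$ minor $m'_{j,k}$ of $\jac(\h_j)$ is non-vanishing at $x$. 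Non-emptiness ${\sf P}_2$ is handled analogously: pick a regular point $x \in {\cal O}(m_j) \cap \aff(V)$, refine $k$ so that $m'_{j,k}(x) \neq 0$, and pair $x$ with any non-zero element of the positive-dimensional space $N_x \aff(V)$.

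For ${\sf P}_1$, fix $(j,k)$ and a point $(x,y) \in {\cal O}(m_j m'_{j,k}) \setminus (\sing(\conormal_V) \cup \mathscr{X} \cup \mathscr{Y})$. Property ${\sf P}_1$ of $\bpsi$ reduces $x \in \aff(V)$ to $\h_j(x) = 0$ on ${\cal O}(m_j)$, while ${\sf P}_3$ of $\bpsi$ ensures that the row span of $\jac(\h_j)(x)$ coincides with $N_x \aff(V)$. The non-vanishing of $m'_{j,k}(x)$ then activates the classical bordering-minor criterion: the matrix $J$ evaluated at $(x,y)$ has rank at most $c$ if and only if each $M_{r,k}$ vanishes, which is precisely the condition $y \in N_x \aff(V)$. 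This gives the required equality ${\cal O}(m_j m'_{j,k}) \cap (\conormal_V \setminus S) = {\cal O}(m_j m'_{j,k}) \cap (Z(\mathbf{H}_{j,k}) \setminus S)$, where $S = \sing(\conormal_V) \cup \mathscr{X} \cup \mathscr{Y}$.

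The main technical step is ${\sf P}_3$: verifying that $\jac(\mathbf{H}_{j,k})$ attains full row rank at every point $(x,y) \in {\cal O}(m_j m'_{j,k}) \cap \conormal_V \setminus S$. I would analyze this jacobian through the block decomposition induced by splitting the ambient variables into $(X_0, \ldots, X_n)$ and $(Y_0, \ldots, Y_n)$. The rows coming from $\h_j$ have vanishing $Y$-derivatives, and their $X$-derivatives form $\jac(\h_j)$, of rank $c$ on ${\cal O}(m_j) \cap \aff(V)$ by ${\sf P}_3$ of $\bpsi$. On the other hand, each bordering minor $M_{r,k}$ is linear in the $Y_i$, and differentiating it with respect to the $Y$-variable indexing the column adjoined to the $c \times c$ submatrix defining $m'_{j,k}$ yields, by cofactor expansion along the last row, $\pm m'_{j,k}$; its derivative with respect to any $Y$-variable indexing a different adjoined column vanishes, because that variable does not appear in $M_{r,k}$. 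Hence, after reordering, the submatrix of partials of the $M_{r,k}$'s with respect to these adjoined $Y$-variables is diagonal with entries $\pm m'_{j,k}$, and therefore of full row rank on ${\cal O}(m'_{j,k})$. Since the $\h_j$-block and the $M_{r,k}$-block are supported on disjoint sets of columns, $\jac(\mathbf{H}_{j,k})$ attains full row rank. The most delicate point I anticipate is the careful bookkeeping of $Y$-variables against the bordering columns needed to justify rigorously this diagonal block structure and to align the count of the $M_{r,k}$'s with the codimension of $\conormal_V$ in $\C^{n+1} \times \C^{n+1}$.
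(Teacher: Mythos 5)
Your proposal is correct and follows essentially the same route as the paper: use the atlas for $\aff(V)$, characterize the fiber condition $y\in N_x\aff(V)$ via the bordering minors of the matrix $J$ (Schur-complement/bordering-determinant criterion) for ${\sf P}_1$--${\sf P}_2$, and prove ${\sf P}_3$ by writing each $M_{r,k}$ as $\pm m'_{j,k}$ times the adjoined $Y$-variable plus terms free of the other adjoined $Y$-variables, so that selecting the $X$-columns of $m'_{j,k}$ and those $Y$-columns gives a block-triangular submatrix of full rank on ${\cal O}(m_jm'_{j,k})$. This is exactly the paper's argument, up to your slightly loose phrase ``disjoint sets of columns,'' which should be read as the block-triangularity you had just established.
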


\begin{proof}
  Recall that we are given an atlas
  $\bpsi=((\h_j, m_j))_{1\leq j \leq J}$ for
  $(\aff(V), \sing(\aff(V)))$.  Let
  $(x, y)\in
  \conormal_V\setminus(\sing(\conormal_V)\cup\mathscr{X}\cup\mathscr{Y})$.
  Then, $x \in \reg(\aff(V))$ (because $x\neq \mathbf{0}$ and the
  projective closure of $V$ is assumed to be smooth) and there exists
  $1\leq j \leq J$ such that $x\in \aff(V)\cap {\cal O}(m_j)$. Besides
  note that $\aff(V)\cap {\cal O}(m_j)$ coincides with
  $Z(\h_j)\cap {\cal O}(m_j)$ (property ${\sf P}_1$) and that $\jac(\h_j)$
  has maximal rank at $x$ (property ${\sf P}_3$). We let $m'_{j,k}$ be a
  $(c\times c)$-minor of $\jac(\h_j)$ which does not vanish at $x$.

  Since $(x, y)\in \conormal_V$, we have $y \perp T_x \aff(V)$. Using
  property ${\sf P}_1$ and ${\sf P}_3$, we deduce that $T_x \aff(V)$ is the
  kernel of $\jac(\h_j)$. We deduce by elementary linear algebra that
  the matrix $J$ introduced above is rank defective at $(x,
  y)$.
  Besides, elementary linear algebra (e.g. using a Schur complement)
  shows that over ${\cal O}(m_j m'_{j,k})$, the variety defined by
  $\h_j(x)=0$ and ${\sf rank}(J(x, y))\leq c$ is defined by
  $\mathbf{H}_{j,k}$. We have established properties ${\sf P}_1$ and
  ${\sf P}_2$. Establishing the fact that the sets ${\cal O}(m_jm'_k)$
  cover
  $\conormal_V\setminus(\sing(\conormal_V)\cup\mathscr{X}\cup\mathscr{Y})$
  (property ${\sf P}_4$) is immediate from the above discussion.
  It remains to prove that $\jac(\mathbf{H}_{j,k})$ has maximal rank
  at $(x, y)$ (property ${\sf P}_3$). Without loss of generality, assume
  that $m'_{j,k}$ is the upper left minor of $\jac(\h_j)$. Observe
  that the minors $M_{1, k}(m'_{j,k}), \ldots, \penalty-1000 M_{n-c, k}(m'_{j, k})$
  can be written as $Y_{c+\ell}m'_{j,k} +\rho_\ell$ where
  $\rho_\ell\subset \Q[X_1, \penalty-1000\ldots, X_n, Y_1, \ldots, Y_c]$.
  Extracting from $\jac(\mathbf{H}_{j,k})$ the columns of $\jac(\h_j)$
  corresponding to $m'_{j,k}$ and those corresponding to the partial
  derivatives w.r.t $Y_{c+\ell}$ for $1\leq \ell\leq n-c$ yields a
  submatrix which is not rank defective over
  $Z(\h_j)\cap {\cal O}(m_jm'_{j,k})$ which ends the proof.
\end{proof}

\subsection{Local equations for $S_i(g,a)$} 
In this section, we build an atlas for $S_i(g,a)$ for generic
$(g, a)$. To do that, we see $(g,a)$ as in point in the space
$\C^N\times \C^{(n+1)i}$ (recall that $N$ is the dimension of the vector
space of homogeneous polynomials in $\C[X_0, \ldots, X_n]$) and see
the entries of $a$ and the coefficients of $g$ as variables.

Formally, for $1\leq r \leq i$, let $A_r=(A_{0,r}, \ldots, A_{n, r})$
be a vector of indeterminates. Let also
${\cal M}=\{(\alpha_0, \ldots, \alpha_n)\in \N^{n+1}\mid
\sum_{j=0}^n\alpha_j= D\}$
and $G=(G_\alpha, \alpha \in {\cal M})$ be a vector of indeterminates.
By abuse of notation, we also denote by $G$ the polynomial
$\sum_{\alpha\in {\cal M}}G_\alpha X^\alpha$ ; it lies in
$\Q(G)[X_0, \ldots, X_n]$.

We consider now the matrix
{\[
\Sigma_i=\begin{bmatrix}
  Y_0&\cdots & Y_n\\
  \partial G/\partial X_0 & \cdots&\partial G/\partial X_n\\
  \rule[2pt]{1cm}{.5pt}& A_1 & \rule[2pt]{1cm}{.5pt}\\
  &\vdots&\\
  \rule[1pt]{1cm}{.5pt} & A_i& \rule[2pt]{1cm}{.5pt}\\
\end{bmatrix}
\]} and the algebraic set
$\mathscr{S}_{i}\subset \C^{n+1}\times\C^{n+1}\times \C^N\times
\C^{(n+1)i}$ defined by $\rank(\Sigma_i)\leq i+1$.

Let $\sigma_1, \ldots, \sigma_L$ be the sequence of
$(i+1, i+1)$-minors of the submatrix $\Sigma_i$ obtained by removing
the line containing partial derivatives of $G$ or the line $A_j$ for
$1\leq j \leq L$ such that
$\mathscr{S}_i\cap {\cal O}(\sigma_\ell)\neq \emptyset$.  For
$1\leq \ell \leq L$, we denote by
$S_{1, \ell}, \ldots, S_{n-i-1, \ell}$ the $(i+2, i+2)$-minors of
$\Sigma_i$ obtained by selecting the rows and columns used to compute
$\sigma_\ell$ and adding the missing row and column from $\Sigma_i$.
We denote by $\mathbf{S}_\ell$ the sequence
$S_{1, \ell}, \ldots, S_{n-i-1, \ell}$.

Finally, we define the set
$\mathscr{T}\subset \C^N\times \C^{(n+1) i}$ as the complementary of
the set of points $(g,a=(a_1, \ldots, a_i))\in \C^N\times \C^{(n+1)i}$
such that
\begin{itemize}
\item the coefficients of $X_rX_s^{D-1}$ in $G$ for $1\leq r,s\leq n$
  with $r\neq s$ are not zero;
\item $(g,a)$ lies in the non-empty open set
  $\mathscr{O}$ defined in Lemma~\ref{lemma:Si};
\item ${\rm Span}(a_1, \ldots, a_i)$ has dimension $i$ and none of the
  entries of $A_r$ is $0$ (for $1\leq r\leq i$).
\end{itemize}
Note that $\mathscr{T}$ is Zariski closed in
$\C^N\times \C^{(n+1) i}$.  Finally, we denote by $\mathscr{S}'$ the
union of $\sing(\mathscr{S}_i)$, the set
$\C^{n+1}\times \C^{n+1}\times \mathscr{T}$ and the subset of points
$\mathscr{S}_i$ such that their $Y$-coordinates are all $0$.

Up to renumbering the sequence of couples
$(\mathbf{S}_\ell, \sigma_\ell)_{1\leq \ell \leq L}$ we assume that
the set of indices $\ell$ such that
$(\mathscr{S}_i\setminus\mathscr{S}')\cap {\cal O}(\sigma_\ell)\neq
\emptyset$ is $\{1, \ldots, L'\}$ (for $L'\leq L$). 

\begin{lemma}\label{lemma:atlas:S}
  The sequence $(\mathbf{S}_\ell, \sigma_\ell)_{1\leq \ell \leq L'}$
  is an atlas for the couple $(\mathscr{S}_i, \mathscr{S}')$. Besides,
  the truncated Jacobian matrix of $\mathbf{S}_\ell$ obtained by
  considering the partial derivatives w.r.t the entries of
  $A_1, \ldots, A_i$ and the coefficients of $G$ has full rank over
  ${\cal O}(\sigma_\ell)$.  Moreover, there exists a non-empty Zariski
  open set $\mathscr{O}''$ such that for all $(g,a)\in \mathscr{O}''$,
  $(\mathbf{S}_\ell, \sigma_\ell)_{1\leq \ell \leq L'}$ is an atlas of
  the couple $(S_i(g,a), \sing(S_i(g,a))$.
\end{lemma}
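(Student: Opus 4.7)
The three claims share a common backbone: the Schur-complement identity, which asserts that on $\mathcal O(\sigma_\ell)$ the rank condition $\rank(\Sigma_i)\leq i+1$ is equivalent to the simultaneous vanishing of the $n-i-1$ bordering minors $S_{1,\ell},\ldots,S_{n-i-1,\ell}$. My plan is to use this identity to verify the atlas axioms for $(\mathscr S_i,\mathscr S')$ in universal form, then refine the rank computation for the truncated Jacobian, and finally specialize to a generic fibre.

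For the atlas property, ${\sf P}_1$ and ${\sf P}_2$ are immediate from the Schur identity and the choice of $L'$. For ${\sf P}_3$, I expand each $S_{k,\ell}$ along its added column $c_k$:
\[
S_{k,\ell} \;=\; \pm\, \sigma_\ell\cdot \xi_k \;+\; \rho_k,
\]
where $\xi_k$ is the entry of the added row sitting in column $c_k$ and $\rho_k$ does not involve $\xi_{k'}$ for any $k'$, because the extra columns $c_{k'}$ (with $k'\neq k$) are absent from the $(i+2)\times(i+2)$ submatrix defining $S_{k,\ell}$. The block of $\jac(\mathbf S_\ell)$ obtained by differentiating with respect to $\xi_1,\ldots,\xi_{n-i-1}$ is therefore $\pm\sigma_\ell\cdot I_{n-i-1}$, hence of full rank on $\mathcal O(\sigma_\ell)$. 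For ${\sf P}_4$, a point of $\mathscr S_i\setminus\mathscr S'$ has $\rank(\Sigma_i)$ exactly $i+1$, so at least one $\sigma_\ell$ of the prescribed shape is nonzero, which gives the covering property.

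The crux of the proof is the truncated statement. If the added row for $(\sigma_\ell,\mathbf S_\ell)$ is some $A_r$, then each $\xi_k$ is already the parameter $A_{c_k,r}$, and the previous diagonal block consists of derivatives with respect to $A$-entries. If instead the added row is $\nabla G$, then $\xi_k=\partial G/\partial X_{c_k}$ is a linear form in the $G$-coefficients; I then select, for each $k$, a multi-index $\alpha_k$ of degree $D$ whose coefficient $G_{\alpha_k}$ contributes to $\partial G/\partial X_{c_k}$ but not to $\partial G/\partial X_{c_{k'}}$ for $k'\neq k$, so that the diagonal structure is preserved. The canonical choice $\alpha_k = D\, e_{c_k}$ yields diagonal entries $\pm D\sigma_\ell\, X_{c_k}^{D-1}$; on pieces of $\mathcal O(\sigma_\ell)\cap(\mathscr S_i\setminus\mathscr S')$ where some $X_{c_k}$ vanishes, I substitute a mixed monomial such as $X_r X_{c_k}^{D-1}$, whose coefficient is nonzero by the very definition of $\mathscr T$. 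This case analysis, combined with the conditions $y\neq 0$ and non-vanishing of the $A$-entries imposed outside $\mathscr T$, is the main technical obstacle and produces a full-rank block in every case.

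For the last claim, take $\mathscr O''$ to be the complement of $\mathscr T$ in $\C^N\times\C^{(n+1)i}$; it is non-empty and Zariski open since $\mathscr T$ is a proper Zariski-closed subset. For $(g,a)\in \mathscr O''$, the fibre of the projection $\mathscr S_i\to \C^N\times\C^{(n+1)i}$ over $(g,a)$ is precisely $S_i(g,a)$, and restricting the universal Schur-complement identity shows that $(\mathbf S_\ell,\sigma_\ell)_{1\leq\ell\leq L'}$ remains an atlas of $S_i(g,a)$ away from its singular locus. The moreover clause of Lemma~\ref{lemma:Si} identifies $\sing(S_i(g,a))$ with the set of points where $\Sigma_i(g,a)$ drops rank, which is exactly the trace of $\mathscr S'$ on the fibre, completing the proof.
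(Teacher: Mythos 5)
Your overall route is the same as the paper's: bordered (Schur-complement) minors as local equations, covering via a nonvanishing $(i+1)$-minor of the prescribed shape, property ${\sf P}_3$ by extracting a block of $\jac(\mathbf S_\ell)$ with $\sigma_\ell$ (possibly times a monomial in $x$) on the diagonal, a case split according to whether the added row is an $A_r$ row or the $\nabla G$ row, and specialization to generic $(g,a)$ via Lemma~\ref{lemma:Si}; your assignment of the two tricks to the two cases ($A$-entries when the added row is $A_r$, $G$-coefficients when it is $\nabla G$) is in fact the coherent reading of the paper's case discussion. However, the key technical step has a concrete slip: in the $\nabla G$ case your fallback monomial is wrong. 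At a point with $x_{c_k}=0$, the coefficient of $X_rX_{c_k}^{D-1}$ contributes $(D-1)x_rx_{c_k}^{D-2}$ through $\partial G/\partial X_{c_k}$, which still vanishes for $D\ge 3$; you need the monomial $X_{c_k}X_r^{D-1}$ with $x_r\neq 0$ (the paper's choice: coefficients of $X_r^{D}$ and $X_sX_r^{D-1}$ for an $r$ with $x_r\neq 0$). Moreover the justification you give — that this coefficient of $g$ is nonzero by the definition of $\mathscr T$ — is beside the point, since you differentiate with respect to that coefficient; what is needed is that $x$ has a nonvanishing coordinate at the points considered, and a word about cross terms, because the chosen coefficient also enters $\partial G/\partial X_r$, so if $r$ is among the selected columns the extracted block is not automatically diagonal and one must argue it is still invertible (or choose the monomials so that it is triangular).

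A second glossed point is the covering property ${\sf P}_4$: it does not follow from ``rank exactly $i+1$'' alone, since all admissible minors $\sigma_\ell$ contain the $Y$ row, and a rank-$(i+1)$ matrix can have every $(i+1)$-minor through the $Y$ row equal to zero (e.g.\ when $y=0$). You must use $y\neq 0$, guaranteed off $\mathscr S'$, together with an exchange argument — the paper does this explicitly by replacing some $a_r$ by $y$ in $\Span(a_1,\ldots,a_i)$ and distinguishing whether $\nabla_x g$ is needed to complete the span — to exhibit a nonzero minor of one of the two allowed shapes. Finally, your specialization step asserts that restricting the universal identities makes $(\mathbf S_\ell,\sigma_\ell)$ an atlas of $(S_i(g,a),\sing(S_i(g,a)))$, but ${\sf P}_3$ after specialization concerns the Jacobian with respect to $X,Y$ only, which is not a formal consequence of the universal full-rank statement; the paper also leaves this to the reader (specialization properties of determinantal ideals plus Lemma~\ref{lemma:Si}), so you are at the same level of detail there, but it should be flagged as requiring an argument rather than claimed as automatic.
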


\begin{proof}
  Take $(x,y,g,a)$ in $\mathscr{S}_i\setminus \mathscr{S}'$.  Since
  $(g,a)\notin \mathscr{S}$, $(g, a)\notin \mathscr{O}$ and
  $(x,y)\notin \sing(S_i(g,a))$. We deduce that $\Sigma_i$ has rank
  $i+1$ at $(x,y,g,a)$. Then, either
  $\dim({\rm Span}(a_1, \ldots, a_i, y))=i+1$ or
  $y \in {\rm Span}(a_1, \ldots, a_i)$ while
  $\grad_{x,y,a,g}(G)\notin {\rm Span}(a_1, \ldots, a_i)$ (because
  $\Sigma_i$ has rank $i+1$ at $(x,y,g,a)$). Since
  $(y_0, \ldots, y_n)\neq 0$ (because $(x,y,g,a)\notin \mathscr{S}$),
  we deduce that there exists $1\leq r \leq i$ such that
  ${\rm Span}(a_1, \ldots, a_{r-1}, a_{r+1}, \ldots, a_i, y)={\rm
    Span}(a_1, \ldots, a_i)$ and we deduce that
\begin{equation*}
  \dim({\rm Span}(a_1, \ldots, a_{r-1}, a_{r+1}, \ldots, a_i,\grad_x(g), y))=i+1.       
\end{equation*}
This implies that one of the $(i+1, i+1)$-minor $\sigma_\ell$ of
$\Sigma_i$ does not vanish at $(x, y, g, a)$. Elementary linear algebra
shows that
$\mathscr{S}_i\cap {\cal O}(\sigma_\ell)\setminus \mathscr{S}'$
coincides with $Z(\mathbf{S}_\ell)$ over
${\cal O}(\sigma_\ell)\setminus\mathscr{S}'$. Thus, we have
established properties ${\sf P}_1$ and ${\sf P}_2$. The covering property ${\sf P}_4$
is immediate and follows also from the above discussion.

It remains to prove property ${\sf P}_3$, i.e. $\jac(\mathbf{S}_\ell)$ has
maximal rank at any point of
$\mathscr{S}_i\cap {\cal O}(\sigma_\ell)\setminus \mathscr{S}'$.
Assume first that $\sigma_\ell$ is a $(i+1, i+1)$-minor obtained from
removing the partial derivatives of $G$ from $\Sigma_i$. Without loss of
generality, we may also assume that it is obtained by selecting the
first $i+1$ columns of $\Sigma_i$.  Then, polynomials in
$\mathbf{S}_\ell$ can be written as
$\sigma_\ell A_{r, i+1}+\rho_{r, \ell}$ for $i+1\leq r\leq n$ where
$\rho_{r,\ell}$ has degree $0$ in $A_r$. That implies that one can
extract a diagonal matrix with $\sigma_\ell$ on the diagonal from
$\jac(\mathbf{S}_\ell)$ which, of course, has maximal rank over
${\cal O}(\sigma_\ell)$.

When $\sigma_\ell$ is obtained by removing one of the line $A_r$
(e.g. $A_i$) a more involved but similar conclusion can be made. Since
we work over the complementary of $\mathscr{S}'$, there exists
$0\leq r \leq n$ such that the $X_r$-coordinate of $x$ is not
$0$. Extracting the submatrix of $\jac(\mathbf{S}_\ell)$ corresponding
to the partial derivatives with respect to the coefficients of $G$ of
the monomials $X_r^D$ and $X_sX_r^{D-1}$ yields a diagonal matrix with
a power of the $X_r$-coordinate of $x$ multiplied by $\sigma_\ell$ on
the diagonal. These are non-zero over
${\cal O}(\sigma_\ell)\setminus\mathscr{S}'$. 

The rank property of the truncated Jacobian matrix of
$\mathbf{S}_\ell$ is an immediate consequence of the above discussion.
Details on the proof of the specialization property of the atlas
$(\mathbf{S}_\ell, \sigma_\ell)$ are left to the reader; we mention
that it is a direct consequence of specialization properties of minors
with polynomial entries and Lemma~\ref{lemma:Si}.
\end{proof}

\subsection{A map and its regularity at the origin} 
Let $m_j$, $m'_k$ and $\mathbf{H}_{j,k}$ be the polynomials introduced
in the paragraph on local equations for $\conormal_V$ and
$\sigma_{\ell}$, $\mathbf{S}_{\ell}$ be the $(i+1, i+1)$-minor and
$(i+2, i+2)$-minors of $\Sigma$ introduced in the paragraph on local
equations of $S_i(g,a)$. Consider the Zariski open set
$\mathscr{U}_{j, k, \ell}\subset \C^n \times \C^n \times \C^N\times
\C^{ni}$ defined by
$$
m_jm'_k\neq 0, \quad (X_0, \ldots, X_n)\neq \mathbf{0}\quad (Y_0,
\ldots, Y_n)\neq \mathbf{0}, \quad \sigma_{\ell}\neq 0
$$
and the inequations defining the complement $\mathscr{S}'$. 
We define now the following map:
$$
  \bphi_{j, k, \ell}:z\in \mathscr{U}_{j, k, \ell}\to (\mathbf{H}_{j, k}(z),
  \mathbf{S}_{\ell}(z)).
$$
Observe that
$\bphi_{j,k,\ell}^{-1}(\mathbf{0})\subset \conormal_V\cap
\mathscr{S}_i$.

\begin{lemma}\label{lemma:maps:regularity}
  The map $\bphi_{j, k, \ell}$ is regular
  at $\mathbf{0}$.
\end{lemma}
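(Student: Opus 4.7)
The plan is to exhibit a block-triangular structure for the Jacobian of $\bphi_{j,k,\ell}$ and then read off maximal rank from the two atlases that were constructed in the previous subsections. Showing that $\bphi_{j,k,\ell}$ is regular at $\mathbf{0}$ means proving that at every point $z \in \bphi_{j,k,\ell}^{-1}(\mathbf{0})$ the differential $d_z\bphi_{j,k,\ell}$ is surjective, i.e., $\jac(\bphi_{j,k,\ell})$ has rank equal to the number of output coordinates, namely $c + (n-c) + (n-i-1) = 2n - i - 1$.

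First I would separate the variables of $\bphi_{j,k,\ell}$ into two groups: the \emph{geometric} variables $(X,Y)$ and the \emph{parameter} variables $(G,A)$ encoding the coefficients of $g$ and the rows of $a$. The crucial observation is that $\mathbf{H}_{j,k}$ is built from $\h_j$ and from minors of the matrix $J$ whose entries are partial derivatives of $\h_j$ together with the $Y_s$'s; in particular, $\mathbf{H}_{j,k}$ depends only on $(X,Y)$ and not on $(G,A)$. Consequently, ordering the rows so that those of $\mathbf{H}_{j,k}$ come first and the columns so that the $(X,Y)$-derivatives come first, the Jacobian takes the block-lower-triangular shape
\[
\jac(\bphi_{j,k,\ell}) \;=\;
\begin{bmatrix}
\jac_{(X,Y)}(\mathbf{H}_{j,k}) & 0 \\[2pt]
\jac_{(X,Y)}(\mathbf{S}_\ell) & \jac_{(G,A)}(\mathbf{S}_\ell)
\end{bmatrix}.
\]

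Next I would invoke Lemma~\ref{lemma:atlas:conormal}, which asserts that $(\mathbf{H}_{j,k}, m_j m'_{j,k})$ is an atlas for $(\conormal_V, \sing(\conormal_V)\cup \mathscr{X}\cup\mathscr{Y})$. Since every point of $\mathscr{U}_{j,k,\ell}$ lies in ${\cal O}(m_j m'_{j,k})$ and avoids $\mathscr{X}\cup\mathscr{Y}$ by the defining inequations, property ${\sf P}_3$ yields that the upper-left block has full row rank $n$ at every such point. Then I would appeal to the second statement of Lemma~\ref{lemma:atlas:S}: the truncated Jacobian of $\mathbf{S}_\ell$ obtained by differentiating only with respect to the entries of $A_1,\dots,A_i$ and the coefficients of $G$ has full row rank $n-i-1$ on ${\cal O}(\sigma_\ell)$. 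This is exactly the lower-right block, so it has full row rank $n-i-1$ at every point of $\mathscr{U}_{j,k,\ell}$.

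Finally, from the block-lower-triangular shape, the rank of $\jac(\bphi_{j,k,\ell})$ at any $z\in\mathscr{U}_{j,k,\ell}$ is at least the sum of the ranks of the diagonal blocks, that is $n + (n-i-1) = 2n-i-1$, which matches the number of rows. In particular, this holds at every $z\in\bphi_{j,k,\ell}^{-1}(\mathbf{0})$, so $\bphi_{j,k,\ell}$ is a submersion there and hence regular at $\mathbf{0}$. There is essentially no hard step: the work has already been done in Lemmas~\ref{lemma:atlas:conormal} and \ref{lemma:atlas:S}, and the only thing to notice is the decoupling of $\mathbf{H}_{j,k}$ from the parameters $(G,A)$, which is what forces the zero block and lets the two rank statements combine.
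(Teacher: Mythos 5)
Your proof is correct and follows essentially the same route as the paper: both exploit the block-triangular shape of the Jacobian coming from the fact that $\mathbf{H}_{j,k}$ does not involve the parameters $(G,A)$, then obtain full rank of the upper-left block from property ${\sf P}_3$ in Lemma~\ref{lemma:atlas:conormal} and full rank of the lower-right (truncated) block from Lemma~\ref{lemma:atlas:S}, and conclude surjectivity of the differential at every point of $\bphi_{j,k,\ell}^{-1}(\mathbf{0})$.
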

\begin{proof}
  Since $j, k$ and $\ell$ are fixed in the sequel, we omit them as
  subscripts. Observe that $\jac(\mathbf{H}, \mathbf{S})$ has the
  following shape
  \[
  \mathbf{J}_{\bphi}=\left [
    \begin{array}{cccc}
      \jac_\X(\mathbf{H})& \mathbf{0}\\
      \jac_\X(\mathbf{S})& \Delta \\
    \end{array}
  \right ]
  \]
  where the last columns correspond to the partial derivates with
  respect to the entries of $A_1, \ldots, A_i$ and $G$. By
  Lemma~\ref{lemma:atlas:conormal}, $(\mathbf{H}_{j, k},\penalty-1000 m_jm'_k)$
  satisfies properties ${\sf P}_1, {\sf P}_2$ and ${\sf P}_3$. This implies that it
  has maximal rank at any point in
  $\bphi^{-1}(\mathbf{0})\subset \mathscr{U}$. By
  Lemma~\ref{lemma:atlas:S}, $\Delta$ has maximal rank at any point of
  $\bphi^{-1}(\mathbf{0})$. We deduce that $\mathbf{J}_{\bphi}$ has
  maximal rank at any point of $\bphi^{-1}(\mathbf{0})$ and our
  conclusion follows.
\end{proof}

In the sequel, for $(g,a)\in \C^N\times \C^{(n+1)i}$, we denote by
$\bphi_{j,k,\ell}^{(g,a)}$ the restricted map
$(x, y)\to \bphi_{j,k,\ell}(x,y,g,a)$. Applying Thom's weak
transversality Theorem (see \cite[Sec 4.2]{SaSc13}) to
$\bphi_{j,k,\ell}$ shows that there exists a non-empty Zariski open
set $\mathscr{O}'''_{j,k,\ell}\subset \C^N\times \C^{(n+1)i}$ such
that for all $(g,a)\in \mathscr{O}'''_{j,k,\ell}$, the restricted map
$\bphi_{j,k,\ell}^{(g,a)}$ is regular at $\mathbf{0}$. Letting
$\mathscr{O}'''$ be the intersection of all these non-empty Zariski
open sets $\mathscr{O}'''_{j,k,\ell}$ leads to the following result.
\begin{lemma}\label{lemma:regular:spec}
  There exists a non-empty Zariski open set
  $\mathscr{O}'''\subset\C^N\times \C^{(n+1)i}$ such that for any
  $(j,k,\ell)$ and $(g,a)\in \mathscr{O}'$, the restricted map
  $\bphi_{j,k,\ell}^{(g,a)}$ is regular at $\mathbf{0}$.
\end{lemma}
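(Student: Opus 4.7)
The plan is to obtain Lemma~\ref{lemma:regular:spec} as a direct consequence of Lemma~\ref{lemma:maps:regularity} combined with the algebraic version of Thom's weak transversality theorem, essentially assembling the pieces described in the paragraph preceding the statement. The role of Lemma~\ref{lemma:maps:regularity} is to provide the ``universal'' regularity at $\mathbf 0$ of the map $\bphi_{j,k,\ell}$ viewed on the enlarged space where $(g,a)$ are additional variables; what Thom's theorem adds is that the restricted maps $\bphi_{j,k,\ell}^{(g,a)}$ obtained by fixing $(g,a)$ remain regular at $\mathbf 0$ for $(g,a)$ outside a proper Zariski closed set.

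First I would fix a triple $(j,k,\ell)$ and rewrite the Jacobian block decomposition of $\bphi_{j,k,\ell}$ used in the proof of Lemma~\ref{lemma:maps:regularity}, namely
\[
\mathbf{J}_{\bphi_{j,k,\ell}}=\begin{bmatrix}
\jac_{\X,Y}(\mathbf{H}_{j,k})& \mathbf{0}\\
\jac_{\X,Y}(\mathbf{S}_{\ell})& \Delta_\ell
\end{bmatrix},
\]
where $\Delta_\ell$ collects the partial derivatives of $\mathbf{S}_{\ell}$ with respect to the parameters, and record from Lemma~\ref{lemma:atlas:conormal} and Lemma~\ref{lemma:atlas:S} that both $\jac_{\X,Y}(\mathbf{H}_{j,k})$ and $\Delta_\ell$ have maximal rank on $\bphi_{j,k,\ell}^{-1}(\mathbf 0)\cap \mathscr{U}_{j,k,\ell}$. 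This is exactly the hypothesis needed to apply the algebraic Thom weak transversality theorem (as formulated, e.g., in \cite[Sec.~4.2]{SaSc13}) to the projection $(x,y,g,a)\mapsto (g,a)$ restricted to $\bphi_{j,k,\ell}^{-1}(\mathbf 0)$: since $\mathbf 0$ is a regular value of $\bphi_{j,k,\ell}$, there is a non-empty Zariski open subset $\mathscr{O}'''_{j,k,\ell}\subset \C^N\times\C^{(n+1)i}$ such that for every $(g,a)\in \mathscr{O}'''_{j,k,\ell}$, $\mathbf 0$ is a regular value of $\bphi_{j,k,\ell}^{(g,a)}$, i.e.\ this restricted map is regular at $\mathbf 0$.

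Second, I would observe that the collection of triples $(j,k,\ell)$ ranges over a finite index set: the atlas of Lemma~\ref{lemma:atlas:conormal} has finitely many charts, each carrying finitely many maximal minors $m'_{j,k}$, and the family of minors $\sigma_\ell$ of $\Sigma_i$ is finite by construction. Setting
\[
\mathscr{O}'''=\bigcap_{(j,k,\ell)} \mathscr{O}'''_{j,k,\ell}
\]
therefore defines a non-empty Zariski open subset of $\C^N\times\C^{(n+1)i}$, and by the previous step every $(g,a)\in\mathscr{O}'''$ makes each $\bphi_{j,k,\ell}^{(g,a)}$ regular at the origin simultaneously, which is the desired conclusion.

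The only subtle point, and the one I would be most careful about, is the verification of the hypotheses of Thom's theorem, specifically that the parameter-derivative block $\Delta_\ell$ has full row rank throughout $\bphi_{j,k,\ell}^{-1}(\mathbf 0)$ (not merely at a single point). This is precisely the content of the ``truncated Jacobian'' statement in Lemma~\ref{lemma:atlas:S}, whose proof exhibits explicit diagonal sub-blocks (coming either from the coefficients $A_{r,i+1}$ or from the coefficients of the monomials $X_r^D$ and $X_sX_r^{D-1}$ of $G$) with nonzero determinants on ${\cal O}(\sigma_\ell)\setminus \mathscr{S}'$; so the hypothesis is already established and the remainder of the argument is essentially a citation of Thom's theorem and a finite intersection of open sets.
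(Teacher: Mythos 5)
Your proposal is correct and follows essentially the same route as the paper: invoke Lemma~\ref{lemma:maps:regularity} for regularity of the universal map $\bphi_{j,k,\ell}$, apply the algebraic Thom weak transversality theorem to deduce a non-empty Zariski open set $\mathscr{O}'''_{j,k,\ell}$ of parameters for which the restricted map is regular at $\mathbf 0$, then intersect over the finitely many triples $(j,k,\ell)$. The only addition is that you make explicit both the finiteness of the index set and the full-rank-on-the-entire-fiber hypothesis of Thom's theorem, which the paper leaves implicit; both points are correct and consistent with the paper's argument.
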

\subsection{Transversality of the intersection} 

Let $\mathscr{O}$ be the intersection of the non-empty Zariski open
sets $\mathscr{O}''$ and $\mathscr{O}'''$ defined in
Lemma~\ref{lemma:atlas:S} and Lemma~\ref{lemma:regular:spec}. Take
$(g,a)\in \mathscr{O}$ and $Z_{(g,a)}$ be the Zariski closure of
$\bigcup_{j, k, \ell}{\bphi_{j,k,\ell}^{(g,a)}}^{-1}(\mathbf{0})$.
Recall that we need to prove the transversality of
$\conormal_V\cap S_i(g,a)$ at any point outside
$\mathscr{X}\cup\mathscr{Y}$.
Let $\balpha_1=( \mathbf{H}_{j,k}, m_jm'_k)$ be the atlas of
$(\conormal_V, \sing(\conormal_V))$ defined in
Lemma~\ref{lemma:atlas:conormal} and
$\balpha_2=(\mathbf{S}_\ell, \sigma_\ell)$ be the atlas of
$(S_i(g,a), \sing(S_i(g,a)))$ defined in Lemma~\ref{lemma:atlas:S}.
We start by proving that the Zariski closure of
$\conormal_V\cap S_i(g,a)\setminus(\mathscr{X}\cup\mathscr{Y})$ equals
$Z_{g,a}$. The inclusion
$Z_{(g,a)}\subset \conormal_V\cap
S_i(g,a)\setminus(\mathscr{X}\cup\mathscr{Y})$
is immediate since all points in
${\bphi_{j,k,\ell}^{(g,a)}}^{-1}(\mathbf{0})\subset Z(\mathbf{H}_{j,k},
\mathbf{S}_{k,\ell})\cap {\cal O}(m_j m'_k\sigma_\ell)$
and
$Z(\mathbf{H}_{j,k}, \mathbf{S}_{k,\ell})\cap {\cal O}(m_j
m'_k\sigma_\ell) = \conormal_V\cap S_i(g,a)\cap {\cal O}(m_j
m'_k\sigma_\ell)$ (property ${\sf P}_2$).
We prove now the reverse inclusion.  It is sufficient to prove that
for any irreducible component $Z$ of the Zariski closure of
$\conormal_V\cap S_i(g,a)\setminus(\mathscr{X}\cup\mathscr{Y})$, there exists
a triple $(j,k,\ell)$ and a Zariski closed subset $F\subsetneq Z$ such
that $Z\setminus F\subset {\bphi_{j,k,\ell}^{(g,a)}}^{-1}(\mathbf{0})$.  Since
$Z$ is an irreducible component of the Zariski closure of
$\conormal_V\cap S_i(g,a)\setminus (\mathscr{X}\cup\mathscr{Y})$,
there exists $(x, y)\in Z$ such that
$(x, y)\notin \mathscr{X}\cup\mathscr{Y}$. Let
$F=Z\cap (\mathscr{X}\cup \mathscr{Y})$.
Now, take $(x, y)\in Z\setminus F$. By property ${\sf P}_4$ applied to $\balpha_1$,
that implies that there exists $j$ and $k$ such that
$x\in Z(\mathbf{H}_{j,k})\cap {\cal O}(m_jm'_k)$.  Besides,
$(y_0, \ldots, y_n)\neq \mathbf{0}$ since $(x, y)\notin F$. This
latter property implies that there exists $\ell$ such that
$\sigma_\ell(x,y)\neq 0$. 
Finally, we have established that
$(Z\setminus F)\cap {\cal O}(m_jm'_k\sigma_\ell)$ is not empty for some
$(j,k,\ell)$.  Property ${\sf P}_2$ applied to $\balpha_1$ and $\balpha_2$
imply that $(x, y)$ lies in $Z(\mathbf{H}_{j,k})$ and
$Z(\mathbf{S}_{\ell})$. We deduce that
$(x, y)\in {\bphi_{j,k,\ell}^{(g,a)}}^{-1}(\mathbf{0})$ which
implies that $Z\setminus F\subset {\bphi_{j,k,\ell}^{(g,a)}}^{-1}(\mathbf{0})$
as requested.

{\bf Property $({\sf T}_1)$.}  Consider an irreducible component $Z$
of the Zariski closure of
$\conormal_V\cap S_i(g,a)\setminus (\mathscr{X}\cup \mathscr{Y})$. The above
discussion implies that $Z$ is an irreducible component of $Z_{(g,a)}$
and that there exists $j,k,\ell$ such that
$Z\cap {\cal O}(m_jm'_k\sigma_\ell)$ is not empty.

{\bf Property $({\sf T}_2)$.} Recall that $(g,a)\in \mathscr{O}'$ and
let $Z$ be an irreducible component of $\conormal_V\cap S_i(g,a)$. We
already proved that $Z$ there exists $(j,k,\ell)$ such that
$Z\cap {\cal O}(m_jm'_k\sigma_\ell)$ is not empty and meets
${\bphi_{j, k, \ell}^{(g,a)}}^{-1}(\mathbf{0})$. By
Lemma~\ref{lemma:regular:spec}, the restricted map
$\bphi_{j, k, \ell}^{(g,a)}$ is regular at $\mathbf{0}$. Then, the
jacobian matrix associated to $\mathbf{H}_{j, k}, \mathbf{S}_{\ell}$
has maximal rank at any point of
$Z\cap {\bphi_{j, k, \ell}^{(g,a)}}^{-1}(\mathbf{0})$, which
concludes the proof.



\section{Non-generic function}
\label{sec:nongeneric}
We show in this section that the bounds in Theorem
\ref{thm:degreeModPolar} hold under milder conditions than the
genericity of the coefficients of $g$.  Consider a
$d$-equidimensional algebraic set $V\subset \C^n$ whose projective closure is
smooth, a set of generators
$f_1,\ldots, f_p$ of $I(V)$, and $g\in\Q[X_1,\ldots, X_n]$ of degree
$D$. Let $a\in\C^{ni}$, $g\in\Q[X_1,\ldots, X_n]$ and $I_{\sf crit}(g,a)$ be
the ideal generated by $f_1,\ldots, f_p$ and the $(n-d+i+1)$-minors of the
matrix
{
$$\begin{bmatrix}
\rule[2pt]{1cm}{.5pt}&\jac(\f)&\rule[2pt]{1cm}{.5pt}\\
\partial g/\partial X_1 & \cdots&\partial g/\partial X_n\\
\rule[2pt]{1cm}{.5pt}& a_1 & \rule[2pt]{1cm}{.5pt}\\
&\vdots&\\
\rule[1pt]{1cm}{.5pt} & a_{i}& \rule[2pt]{1cm}{.5pt}\\
\end{bmatrix}.$$}
\begin{proposition}\label{prop:nongeneric} Let $i\in\{0,\ldots, d\}$ and
  $a\in\C^{ni}$.
  Assume that the ideal $I_{\sf
    crit}(g,a)$ is radical and $W(g,a)$ is empty or
  $(i-1)$
  equidimensional. Then there exists a non-empty Zariski open subset
  $\mathscr
  O\subset \C^{i n}$ such that the following holds. For any $a=(a_1,
  \ldots, a_i)\in\mathscr O$, the degree of $W((g,\mathbf
  a),V)$ is bounded above by the bounds in Theorem
  \ref{thm:degreeModPolar}.
\end{proposition}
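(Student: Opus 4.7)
The approach is to reduce to the generic case handled by Theorem~\ref{thm:degreeModPolar} via a one-parameter deformation. I would pick a generic $\tilde g\in\C[X_1,\ldots,X_n]_{\leq D}$ and a generic $\tilde a\in\C^{i\times n}$, and consider the family $(g_\epsilon,a_\epsilon):=(g+\epsilon\tilde g,\,a+\epsilon\tilde a)$ for $\epsilon\in\C$. Since the open set $\Omega_i$ from Theorem~\ref{thm:degreeModPolar} is Zariski dense in $\C[X_1,\ldots,X_n]_{\leq D}\times\C^{i\times n}$, the pair $(g_\epsilon,a_\epsilon)$ lies in $\Omega_i$ for every $\epsilon$ in a non-empty Zariski open subset $U\subset\C$, hence $\deg(W((g_\epsilon,a_\epsilon),V))\leq B$ for $\epsilon\in U$, where $B$ denotes the bound claimed in Theorem~\ref{thm:degreeModPolar}.

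Next, I would form the incidence variety $\mathcal W'\subset V\times\C$ as the Zariski closure of the set $\{(x,\epsilon)\mid \epsilon\in U,\,x\in W((g_\epsilon,a_\epsilon),V)\}$. No irreducible component of $\mathcal W'$ is contained in $\{\epsilon=0\}$, so its coordinate ring is $\epsilon$-torsion-free over $\C[\epsilon]$; since $\C[\epsilon]$ is a PID this forces flatness of $\mathcal W'\to\C$, hence local constancy of the Hilbert polynomials, and in particular of the degrees of the fibers. One deduces $\deg(\mathcal W'_0)\leq B$ and, from the construction, $\mathcal W'_0\subseteq W((g,a),V)$.

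It remains to identify $\mathcal W'_0$ with $W((g,a),V)$. Let $Z$ be an irreducible component of $W((g,a),V)$ and $x$ a smooth point of $Z$. The hypotheses that $I_{\sf crit}(g,a)$ is radical and that $W((g,a),V)$ is $(i-1)$-equidimensional imply that $I_{\sf crit}(g,a)$ is a local complete intersection at $x$ of codimension $n-i+1$: one may extract $n-i+1$ generators whose Jacobian attains rank $n-i+1$ at $x$. Applying the implicit function theorem to the same selection inside $I_{\sf crit}(g_\epsilon,a_\epsilon)$, one obtains a smooth analytic $i$-dimensional manifold $M$ through $(x,0)$ contained in $\tilde{\mathcal W}:=V(I_{\sf crit}(g_\epsilon,a_\epsilon))$. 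The projection $M\to\C$ is dominant, so $(x,0)$ lies in the closure of $M\cap\{\epsilon\neq 0\}\subset\mathcal W'$, whence $x\in\mathcal W'_0$. As smooth points are dense in $Z$ and $\mathcal W'_0$ is closed, $Z\subseteq\mathcal W'_0$, yielding the equality and the desired bound. The main subtlety is to verify that the remaining generators of $I_{\sf crit}(g_\epsilon,a_\epsilon)$ do not cut $M$ down to a strictly smaller subscheme near $(x,0)$; this is where the radicality hypothesis plays its decisive role, by allowing the complete intersection description of $I_{\sf crit}(g,a)_x$ to extend flatly to $I_{\sf crit}(g_\epsilon,a_\epsilon)$ in a neighborhood.
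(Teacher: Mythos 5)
Your overall strategy is the same as the paper's: deform $(g,a)$ along a pencil that reaches the good parameter set of Theorem~\ref{thm:degreeModPolar}, bound the degree of the limit by the generic degree, and use radicality plus $(i-1)$-equidimensionality together with the implicit function theorem to show that $W((g,a),V)$ survives into that limit. The paper implements the first ingredient via a nonzero polynomial $Q$ cutting out the bad parameters and Lemma~\ref{lem:existencePoly} (this is also what your first step secretly needs: Theorem~\ref{thm:degreeModPolar} only asserts a \emph{dense} subset $\Omega_i$, so to know a generic line meets it in a dense set of $\epsilon$'s you must first place the bad locus inside a hypersurface), and it implements the degree comparison by cutting with a generic affine space $E$ of codimension $i-1$ so that only the finitely many reduced points of $W((g,a),V)\cap E$ need to be followed. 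Your flatness-over-$\C[\epsilon]$ formulation of the degree comparison is a legitimate alternative, though "constancy of Hilbert polynomials" is a projective statement: you should pass to closures in $\P^n\times\C$ and only claim $\deg \mathcal W'_0\le B$ (components may be lost at infinity, which is harmless here).

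The genuine gap is the step you yourself flag at the end, and it is not a removable subtlety as written. Your manifold $M$ is, near $(x,0)$, the zero set of only $n-i+1$ \emph{selected} generators of $I_{\sf crit}(g_\epsilon,a_\epsilon)$; for $\epsilon\neq 0$ a point of $M$ satisfies only those equations, and nothing forces the remaining generators (the other minors and the $f_j$'s) to vanish there. Hence $M\cap\{\epsilon\neq 0\}$ is not known to lie in the incidence variety $\mathcal W'$, and the conclusion $x\in\mathcal W'_0$ does not follow. The justification you propose --- that radicality lets the complete-intersection description ``extend flatly'' to $I_{\sf crit}(g_\epsilon,a_\epsilon)$ --- is not an argument: radicality is a hypothesis about the single ideal at $\epsilon=0$; it tells you that near $x$ the unselected generators lie in the ideal of the selected ones \emph{at} $\epsilon=0$, and gives no control over the ideals for $\epsilon\neq 0$. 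Ruling out exactly this jumping of the critical locus at the special parameter is the content of the proposition, so it cannot be absorbed into a one-line appeal to radicality. The paper organizes this step so that the tracked solutions stay attached to the family: after cutting with $E$, it takes $n$ generic linear combinations $v_1,\ldots,v_n$ of the equations defining $W((\mathfrak g,\mathfrak a),V)\cap E$ over $\Q[\mathfrak e]$ --- so the $v_j(\cdot,\varepsilon)$ vanish on $W((g_\varepsilon,a_\varepsilon),V)\cap E$ for \emph{every} $\varepsilon$, not only at $\varepsilon=0$ --- with Jacobian invertible at each point of $W((g,a),V)\cap E$ thanks to radicality, and then applies the holomorphic implicit function theorem to follow these points for small $\varepsilon$. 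If you wish to keep your incidence-variety formulation, you need an analogous device (a square system built from the deformed family itself, or a separate argument that the unselected generators vanish on $M$ for small $\epsilon$); the subsystem taken verbatim from the $\epsilon=0$ data does not provide it.
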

\begin{lemma}\label{lem:existencePoly}
  Let $Q\in\C[T_1,\ldots, T_\ell]$ be a nonzero multivariate
  polynomial, and $(t_1,\ldots, t_\ell)\in\C^\ell$ be such that
  $Q(t_1,\ldots,t_\ell) = 0$. Then there exist univariate polynomials
  $u_1,\ldots, u_\ell\in\C[\mathfrak e]$ such that for all
  $i\in\{1,\ldots,\ell\}$, $u_i(0)=t_i$ and
  $Q(u_1(\mathfrak e),\ldots, u_\ell(\mathfrak e))\in\C[\mathfrak e]$
  is not identically zero.
\end{lemma}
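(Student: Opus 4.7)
The plan is to take the $u_i$'s to be degree-one polynomials parametrizing a carefully chosen complex line through $(t_1,\ldots,t_\ell)$. The key observation is that, since $Q\in\C[T_1,\ldots,T_\ell]$ is nonzero, its zero set is a proper Zariski closed subset of $\C^\ell$ and hence cannot cover $\C^\ell$. In particular there exists a point $(s_1,\ldots,s_\ell)\in\C^\ell$ with $Q(s_1,\ldots,s_\ell)\neq 0$.

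First I would fix such a point $(s_1,\ldots,s_\ell)$ and then set $u_i(\mathfrak e)=t_i+(s_i-t_i)\mathfrak e$ for $i\in\{1,\ldots,\ell\}$. By construction $u_i(0)=t_i$, so the first required property holds. Next, I would consider the univariate polynomial $R(\mathfrak e):=Q(u_1(\mathfrak e),\ldots,u_\ell(\mathfrak e))\in\C[\mathfrak e]$. Evaluating $R$ at $\mathfrak e=1$ yields $R(1)=Q(s_1,\ldots,s_\ell)\neq 0$, so $R$ cannot be the zero polynomial, which gives the second required property.

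There is really no obstacle here: the statement is an instance of the fact that any two points of affine space are joined by a complex line, combined with the density of $\{Q\neq 0\}$ in $\C^\ell$. The only conceptual point worth flagging is the use of $\C$ (or at least an infinite field) to guarantee the existence of $(s_1,\ldots,s_\ell)$ with $Q(s_1,\ldots,s_\ell)\neq 0$; this would fail over finite fields, but is immediate in our setting.
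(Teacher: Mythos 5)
Your proof is correct. It shares with the paper the idea of restricting $Q$ to a line through $\mathbf t=(t_1,\ldots,t_\ell)$ parametrized by degree-one polynomials $u_i(\mathfrak e)=t_i+s_i\mathfrak e$, but the way you certify nonvanishing is genuinely different and simpler: you choose the second point $(s_1,\ldots,s_\ell)$ (in your notation, the direction $s_i-t_i$) so that it lies off the hypersurface $Q=0$, and then the one-line evaluation $R(1)=Q(s_1,\ldots,s_\ell)\neq 0$ shows $R$ is not the zero polynomial. The paper instead chooses only a \emph{direction} $\mathbf s$ and argues via the Taylor expansion of $Q$ at $\mathbf t$: it takes the smallest $k$ with $\partial^k Q(\mathbf t)(\mathbf s,\ldots,\mathbf s)\neq 0$ and observes that the coefficient of $\mathfrak e^k$ in $Q(\mathbf t+\mathfrak e\mathbf s)$ is then nonzero. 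Your route avoids derivatives and the expansion altogether, and as you note it works verbatim over any infinite field, whereas the derivative-based argument is tied to characteristic zero (in positive characteristic all derivatives of a nonzero polynomial can vanish); the paper's version, on the other hand, additionally exhibits the order of vanishing of $R$ at $\mathfrak e=0$, which is not needed for the application. Both arguments are complete; yours is the more elementary and slightly more general one.
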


\begin{proof}
  We prove the existence of $u_1,\ldots, u_\ell$ of the form
  $u_i(\mathfrak e)=t_i+ s_i\mathfrak e$, where $s_i\in\C$ for all
  $i\in\{1,\ldots,\ell\}$.  Let $\mathbf t$ and $\mathbf s$ be
  shorthands for $(t_1,\ldots, t_\ell)$ and $(s_1,\ldots, s_\ell)$.
  Using Taylor's expansion, we write
  $Q(\mathbf t+\mathfrak e \mathbf s)= \mathfrak e \partial Q(\mathbf
  t)(\mathbf s)+ \mathfrak e^2 \partial^2 Q(\mathbf t)(\mathbf
  s,\mathbf s)/2 + \ldots +\mathfrak e^{\deg(Q)} \partial^{\deg(Q)} Q
  (\mathbf t)(\mathbf s,\ldots, \mathbf s)/\deg(Q)!$.
  Since $Q\neq 0$, at least one of its derivatives is not zero at
  $\mathbf t$. Let $k$ be the smallest integer such that
  $\mathbf u\mapsto \partial^k Q(\mathbf t)(\mathbf u,\ldots, \mathbf
  u)$
  is not the zero map. Finally, let $\mathbf s$ be such that
  $\partial^k Q(\mathbf t)(\mathbf s,\ldots, \mathbf s)\neq 0$.
  Hence, we have
  $Q(\mathbf t+\mathfrak e\mathbf s)-\mathfrak e^k \partial^k
  Q(\mathbf t)(\mathbf s,\ldots, \mathbf s)/k! = O(\lvert \mathfrak
  e^{k+1}\rvert)$.
  Consequently, $Q(\mathbf t+\mathfrak e\mathbf s)$ cannot be
  identically zero, as this would imply
  $\mathfrak e^k = O(\lvert\mathfrak e^{k+1}\rvert)$.
\end{proof}

\begin{proof}[of Proposition \ref{prop:nongeneric}] The proof is a classical deformation argument similar to the one used in
  \cite{nie2009algebraic}. 
  Further we assume that $W(g,a)$ is not empty (else the result is
  immediate).  By Theorem \ref{thm:degreeModPolar}, there exists a
  polynomial $Q$ in $N+ni=\binom{n+D}{n}+ni$ variables whose zero-set
  encode the pairs $(g,a)$ for which the bounds are not satisfied.  By
  Lemma \ref{lem:existencePoly}, there exists
  $(\mathfrak g, \mathfrak a)\in\Q[\mathfrak e][X_1,\ldots, X_n]\times
  \Q[\mathfrak e]^{ni}$
  such that their evaluation at $\mathfrak e = 0$ is $(g,a)$ and the
  evaluation of $Q$ at $(\mathfrak g, \mathfrak a)$ (seen as an
  element in $\Q[\mathfrak e]^{N+ni}$) is nonzero. For
  $\varepsilon\in\C$, we let $(g_\varepsilon, a_\varepsilon)$ denote
  the evaluation of $\mathfrak g$ and $\mathfrak a$ at $\mathfrak e=\varepsilon$. For
  $i\in\{1,\ldots, d\}$, the set of affine spaces in $\C^n$ of
  codimension $i-1$ can be identified with a dense open subset of the
  Grassmaniann of $(n-i+2)$-dimensional vector spaces in $\C^{n+1}$.
  Since $W((g,a),V)$ is $(i-1)$-equidimensional, there exists a dense
  open subset $\mathscr O$ of this Grassmanian such that for any $E$
  in $\mathscr O$, the intersection $W( (g,a), V)\cap E$ is
  transverse, finite and its cardinality equals the degree of
  $W( (g,\mathbf a_i),V)$.  Let $\mathbf x\in\C^n$ be a point in this
  intersection.  Let
  $v_1,\ldots, v_n\in\C[X_1,\ldots, X_n,\mathfrak e]$ be polynomials
  satisfying the following assumptions: $v_1,\ldots, v_n$ is a regular
  sequence, for every $\varepsilon\in\C$ their evaluations at
  $\mathfrak e=\varepsilon$ vanish on
  $W( (g_\varepsilon,a_\varepsilon), V)\cap E$, and the jacobian
  matrix
  $\jac(v_1(X_1,\ldots,X_n,\penalty-1000 0),\ldots, v_n(X_1,\ldots,
  X_n, 0))$
  is invertible at $\mathbf x$ (since $I_{\sf crit}(g,a)$ is
  radical). In order to obtain such polynomials, we
  consider $n$ generic linear combinations of the equations defining
  $W( (\mathfrak g, \mathfrak a), V)\cap E$. Then the holomorphic implicit mapping theorem
  \cite[Thm.  8.6]{kaup1983holomorphic} states that for
  $\mathbf x_0\in W( (g, a), V)\cap E$ there exist open neighborhoods
  (for the Euclidean topology) $0\in U_1\subset \C$,
  $\mathbf x_0\in U_2$ such that there is a holomorphic map
  $\varepsilon \mapsto \{\mathbf x\in U_2 \mid v_1(\mathbf x,
  \varepsilon) = \cdots =v_n(\mathbf x, \varepsilon) = 0\}$
  on $U_1$.  In particular this map is continuous, which implies that
  for $\varepsilon\in\C$ with sufficiently small complex modulus, the
  cardinality of $W( (g_\varepsilon,a_\varepsilon), V)\cap E$ is
  bounded below by the degree of $W( (g, a), V)$. Since this is true
  for any $E$ in the Zariski dense open subset $\mathscr O$ of affine
  subsets, the cardinality of
  $W( (g_\varepsilon, a_\varepsilon), V)\cap E$ equals its
  degree. Finally, as $Q$ is not identically zero on the coefficients
  of $(\mathfrak g, \mathfrak a)$, for $\varepsilon_0$ with
  sufficiently small modulus, the evaluation of $Q$ at the
  coefficients of $(g_{\varepsilon_0},a_{\varepsilon_0})$ is
  nonzero. Consequently, the bounds in Theorem
  \ref{thm:degreeModPolar} hold for
  $W( (g_{\varepsilon_0}, a_{\varepsilon_0}), V)$ and hence they also
  hold for $W( (g, a), V)$.  \end{proof}


\section{Algorithms}
\label{sec:algo}
{\bf Terminology and computational model.} In this section, we consider \emph{bounded
error probabilistic algorithms}. These
algorithms are probabilistic random-access stored-program machines whose
probability of success is bounded from above by an \emph{a priori} bound. 
It is the same computational model as in \cite{giusti2001grobner}.
Complexity bounds count the number of arithmetic operations
($+$, $-$, $\times$, $/$) in $\Q$. 

A lifting fiber is a data structure giving an exact representation of
an equidimensional algebraic set.  We recall below its definition and
we refer to \cite[Sec. 3.4]{giusti2001grobner} for more details.
\begin{definition}\cite[Def. 4]{giusti2001grobner}\label{def:liftingfiber}
Let $V\subset \C^n$ be a $d$-equi\-dimen\-sional variety defined over $\Q$
(\emph{i.e.} $Z_\C(I_\Q(V))=V$). A \emph{lifting fiber} for $V$
 is a tuple $\mathscr L = (\mathbf G, M, \mathbf z, u, Q, \mathbf v)$:
\begin{itemize}
  \item A \emph{lifting system} $\mathbf H = (h_1,\ldots, h_{n-d})\in\Q[X_1,\ldots, X_n]$, such that
    $h_1,\ldots, h_{n-d}$ is a reduced regular sequence and $V\subset Z(\mathbf
    H)$.
  \item A $n\times n$ invertible matrix $M$ with entries in $\Q$ such that
    the coordinates $Y=M^{-1} X$ are in \emph{Noether position} w.r.t. $V$;
  \item A \emph{rational lifting point} $\mathbf z=(z_1,\ldots, z_{d})\in\Q^{d}$;
  \item A \emph{primitive element} $u: \C^n\rightarrow \C$, which is a linear
    form with rational coefficients having distinct values at all points of the
    finite set $V^{(\mathbf z)} = V\cap \{Y_1 - z_1 = \dots = Y_d-z_d = 0\}\subset
    \C^n$;
  \item A \emph{polynomial} $Q\in\Q[T]$ of minimal degree vanishing at all
    points of $u(V^{(\mathbf z)})$;
  \item \emph{univariate polynomials}
    $\mathbf v=(v_{d+1},\ldots, v_n)\in\Q[T]^{n-d}$ of degree less
    than $\deg(Q)$ such that
    {$$\begin{array}{c}
                                         Y_1-z_1 = \dots  = Y_{d}-z_{d}=0\\
                                         Y_{d+1} - v_{d+1}(T) = \dots = Y_n-v_n(T)=0, 
                                         Q(T) = 0
             \end{array}$$}
           is a rational parametrization of $V^{(\mathbf z)}$ by the roots of $Q$.
\end{itemize}
The sequence $(M, u, Q, \mathbf v)$ is called a \emph{geometric
  resolution} of $V$.
\end{definition}
Computing a lifting fiber can be achieved in a probabilistic way with
the Kronecker solver \cite{durvye2008concise}. We assume that we know a probabilistic algorithm \textsc{PolarVar}
which takes as input $d\in\N$, a lifting fiber of a
$d$-equidimensional variety $V\subset\C^n$ and a sequence
$\mathbf a =(a_1,\ldots, a_d)\in\Q^{d\times n}$; it returns a
geometric resolution of the $0$-dimensional polar variety
$W(\mathbf a_1,V)$ or ``fail''. We use also the routine
\textsc{ChangePrimitiveElement} \cite[Algo. 6]{giusti2001grobner}.

In \cite{bank2013degeneracy}, the authors propose an algorithm which
takes as input a reduced regular sequence $f_1,\ldots, f_{n-d}$
defining a $d$-equidimensional algebraic set $V\subset\C^n$, a matrix
$\mathbf F$ whose entries are multivariate polynomials, and a sequence
$a=(a_1\ldots, a_d)$ of vectors in $\Q^n$. It returns lifting
fibers for the associated \emph{degeneracy loci}. If $\mathbf F$ turns
out to be the jacobian matrix of the regular sequence defining the
variety, then these degeneracy loci are the classical polar varieties
$W(a, V)$, see \cite[Section 5.1]{bank2013degeneracy}. This
algorithm works in two steps: it computes first a lifting fiber for
$V$; then, from this lifting fiber and from the matrix
$a$, it computes lifting fibers for the degeneracy loci.  In
the case of polar varieties, the complexity of the second step is
bounded by $L(nD_{\max})^{O(1)}\delta^2$, where $\delta$ is the
maximum of the degrees of the polar varieties $W(\mathbf a_i, V)$ (where
$\mathbf a_i=(a_1,\ldots, a_i)$),
$D_{\max}$ is the maximum of the degrees of $f_1,\ldots, f_{n-d}$, and
$L$ is the size of an essentially division-free straight line program
for evaluating $f_1,\ldots, f_{n-d}$.

Let $a=(a_1,\ldots, a_d)$ be a sequence of $d$ vectors in $\Q^n$. We
construct another sequence $a'=(e_{n+1}, a_1',\ldots, a_d')$ of vectors
in $\Q^{n+1}$ defined by the $(d+1)\times(n+1)$ coefficient
matrix
{$$A'=\left[\begin{array}{ccc|c}
  &\mathbf 0& & 1\\
  \hline
  \rule[2pt]{1cm}{.5pt}&a_1&\rule[2pt]{1cm}{.5pt}&\\
  \vdots&\vdots&\vdots&\mathbf 0\\
  \rule[2pt]{1cm}{.5pt}&a_d&\rule[2pt]{1cm}{.5pt}&
\end{array}\right].
$$}

\begin{lemma}\label{lem:projcrit} Let $\Pi_n:\C^{n+1}\rightarrow\C^n$ be the
  projection on the $n$ first coordinates, and $f_1,\ldots,
  f_p\in\Q[X_1,\ldots, X_n]$ be polynomials defining a reduced smooth
  $d$-equidimensional variety and $g\in\Q[X_1,\ldots, X_n]$ be a polynomial.
  Then for any $\mathbf a\in\Q^{d\times n}$ and for $i\in\{0,\ldots,
  d\}$, the modified polar variety $W((g,a), Z(f_1,\ldots, f_p))$ equals
  $\Pi_n(W(a'_{i+1}, Z(f_1,\ldots, f_p, g-X_{n+1}))).$ \end{lemma}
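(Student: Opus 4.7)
The plan is to unwind the definitions of the two polar varieties and then perform a row reduction on the jacobian matrix defining $W(a'_{i+1}, Z(f_1,\ldots, f_p, g-X_{n+1}))$ to identify it with the one defining $W((g,a), V)$. Throughout, denote $V = Z(f_1,\ldots, f_p) \subset \C^n$, $V' = Z(f_1,\ldots, f_p, g-X_{n+1})\subset \C^{n+1}$, and a point in $\C^{n+1}$ as $(x, x_{n+1})$ with $x\in\C^n$.

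First I would observe that the map $x \in V \mapsto (x, g(x))\in V'$ is a bijection with inverse $\Pi_n$, since the equation $g-X_{n+1}=0$ determines $x_{n+1}$ from $x$. Hence it suffices to prove that $(x, g(x)) \in W(a'_{i+1}, V')$ if and only if $x \in W((g,a), V)$. Note also that since $V$ is smooth of codimension $p$ and $g - X_{n+1}$ has non-vanishing gradient in the $X_{n+1}$-direction, $V'$ is a reduced equidimensional variety of codimension $p+1$ in $\C^{n+1}$.

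Next I would write down the jacobian matrix whose rank-deficiency defines $W(a'_{i+1}, V')$ at a point $(x, g(x))$. The equations defining $V'$ contribute the rows of $\jac(\f)(x)$ (with a zero last column) and the row $(\grad g(x), -1)$; the linear forms $e_{n+1}, a_1', \ldots, a_i'$ contribute respectively the rows $(\mathbf 0, 1)$ and $(a_j, 0)$ for $1 \le j \le i$. Adding the row coming from $e_{n+1}$ to the row coming from $g - X_{n+1}$ replaces $(\grad g(x), -1)$ by $(\grad g(x), 0)$. After this row operation, the last column contains a single nonzero entry (the $1$ in the $e_{n+1}$ row). Consequently the rank of the full $(p+i+2)\times(n+1)$ matrix equals $1$ plus the rank of the $(p+i+1)\times n$ submatrix obtained by deleting the last column and the $e_{n+1}$ row; this submatrix is exactly $\jac_x(\f, g, \varphi_{\mathbf a_i})$.

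The rank condition for $W(a'_{i+1}, V')$ is $\mathrm{rank} < (p+1) + (i+1) = p+i+2$, which translates via the previous paragraph into $\mathrm{rank}(\jac_x(\f, g, \varphi_{\mathbf a_i})) < p+i+1 = c + i + 1$, i.e., exactly the defining rank condition for the modified polar variety $W((g,a), V)$. Combined with $x\in V$, this proves both inclusions. The only subtlety is bookkeeping the codimensions and the size of the matrix so that the two rank-deficiency thresholds match after the row operation; everything else is straightforward linear algebra on the jacobian.
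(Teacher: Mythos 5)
Your proof is correct and takes essentially the same approach as the paper, which simply states that ``direct computations'' identify the two jacobian matrices; you fill in exactly those computations, and the row-reduction argument is the intended one. One small bookkeeping caveat: the rank thresholds in the definitions of polar varieties are given by the codimension $c=n-d$ of $V$, not by the number $p$ of generators (which may exceed $c$); your argument goes through verbatim with $c$ in place of $p$, since the threshold for $W(a'_{i+1},V')$ is $\codim(V')+(i+1)=(c+1)+(i+1)$, which after the row operation and deletion of the distinguished row and column gives $\rank(\jac_x(\f,g,\varphi_{\mathbf a_i}))<c+i+1$, exactly the defining condition of $W((g,a),V)$.
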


\begin{proof}
  Set $V=Z(f_1,\ldots, f_p)\subset \C^n$ and $V' = Z(f_1,\ldots,
  f_p,\penalty-1000
  g-X_{n+1})\subset\C^{n+1}$.
  Direct computations show that if $V$ is smooth, then so is $V'$.
  The modified polar variety
  $W( (g,\mathbf a_i), V)$ is defined by the set of points in $V$ at which
{
$$\rank \left[\begin{array}{ccc}  
    \rule[2pt]{1cm}{.5pt}& \jac(\f)&\rule[2pt]{1cm}{.5pt}\\
    \rule[2pt]{1cm}{.5pt}&\nabla g&\rule[2pt]{1cm}{.5pt}\\
   \rule[2pt]{1cm}{.5pt} &a_1&\rule[2pt]{1cm}{.5pt}\\
   \vdots&\vdots&\vdots\\
   \rule[2pt]{1cm}{.5pt} &a_1&\rule[2pt]{1cm}{.5pt}\\
\end{array}\right]\leq n-d+i.$$}
Direct computations show that the corresponding matrix for $W(\mathbf
a'_i,\penalty-1000 V')$
has the same rank at any point $(x,g(x))$ where $x\in V$.
\end{proof}
{
\begin{algorithm}\label{algo:georescritpoints}\caption{\textsc{CritPoints}}
  \SetKwInOut{Input}{Input}
  \SetKwInOut{Output}{Output}
  \Input{{\begin{itemize}
  \item A lifting fiber $(\mathbf H, M, \mathbf z, u, Q,
 \mathbf v)$ for a smooth $d$-equidimensional
    variety $V\subset\C^n$
  \item $g\in\Q[X_1,\ldots, X_n]$ and  $a = (a_1,\ldots, a_d)\in\Q^{d\times n}$
  \item A primitive element $u_{\sf crit}$ for $W(\mathbf a_1, V)$
\end{itemize}}}

{$M'\gets\left[\begin{array}{ccc|c}
     &&&\\
     &M&&\mathbf 0\\
     &&&\\\hline
   &\mathbf 0&&1\end{array}\right]$\;

$\mathscr L'\gets(\mathbf H\cup\{g-X_{n+1}\}, M', (z_1,\ldots, z_n, g(z_1,\ldots,
z_n)), \penalty-1000 u, Q, (v_1,\ldots, v_n, g\circ (v_1(T),\ldots, v_n(T))\bmod Q(T)))$\;

$a'\gets \text{sequence of rows of } \left[\begin{array}{ccc|c}
  &\mathbf 0& & 1\\
  \hline
  \rule[2pt]{1cm}{.5pt}&a_1&\rule[2pt]{1cm}{.5pt}&\\
  \vdots&\vdots&\vdots&\mathbf 0\\
  \rule[2pt]{1cm}{.5pt}&a_d&\rule[2pt]{1cm}{.5pt}&
\end{array}\right]
$\;

$\mathscr L^{(2)}\gets \textsc{PolarVar}(d,\mathscr L', a')$\text{ or return ``fail''}\;

$(\mathbf H', M^{(2)}, \mathbf z', u_{\sf crit}, Q', \mathbf
v')\gets\textsc{ChangePrimitiveElement}(\mathscr L^{(2)}, u_{\sf crit}\circ
M^{(2)})$\;

$\mathbf v^{(2)}\gets (M^{(2)})^{-1}\cdot (v'_1,\ldots, v'_{n+1})^T$\;

return $({\sf Id}_n,u_{\sf crit}\circ (M^{(2)})^{-1}, Q', (v'_1,\ldots, v'_n))$\;}
\end{algorithm}
}
\begin{theorem} 
  Let $\mathscr L=(\mathbf H, M, \mathbf z, u, Q, \mathbf v)$ be a
  lifting fiber for a $d$-equidimensional algebraic set $V$,
  $g\in \Q[X_1,\ldots,X_n]$ be a polynomial of degree $D\geq 2$ and
  $a\in\C^{ni}$. Assume that $V, g$ and $a$ satisfy the same
  assumptions as in Proposition \ref{prop:nongeneric}. Let $D_{\max}$
  be the maximum of the degrees of $h_1,\ldots, h_{n-d}, g$, (where
  $\mathbf H=(h_1,\ldots, h_{n-d})$) and $u$ be a primitive element
  for $W(g,V)$. Assume that the evaluation map
  $x\mapsto (h_1(x),\ldots,h_{n-d}(x), g(x))$ is represented by an
  essentially division-free straight-line program of size $L$.
  Algorithm \ref{algo:georescritpoints} with input
  $(\mathscr L, \mathbf a, u)$ computes a geometric resolution
  of the set $W(g,V)$ or it returns ``fail''. Using the algorithm in
  \cite{bank2013degeneracy} for \textsc{PolarVar}, it requires at most
  $(n D_{\max})^{O(1)}\widetilde O(L \Delta^2)$ operations in $\Q$,
  where $\Delta=\sum_{j=0}^{d} \delta_{j+1}(V) (D-1)^{j-i}$.
\end{theorem}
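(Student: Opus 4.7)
The strategy is to first establish correctness by reducing the computation to a classical polar variety problem on an auxiliary variety in one dimension higher, and then derive the complexity from the known bounds on the \textsc{PolarVar} subroutine combined with the degree estimates of Theorem~\ref{thm:degreeModPolar}.

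For correctness, I would first verify that the tuple $\mathscr{L}'$ constructed in lines~2--3 is indeed a valid lifting fiber for the variety $V' = V \cap \{g - X_{n+1} = 0\} \subset \C^{n+1}$ in the sense of Definition~\ref{def:liftingfiber}. The main points to check are: $\mathbf{H} \cup \{g - X_{n+1}\}$ is a reduced regular sequence (since $g - X_{n+1}$ is monic in $X_{n+1}$ and hence nonzerodivisor modulo $\langle \mathbf{H}\rangle$); the block matrix $M'$ preserves Noether position for $V'$ (the new coordinate is already separated); the extended lifting point $(\mathbf{z}, g(\mathbf{z}))$ lies in $V'$ and the fiber $V'^{(\mathbf{z}')}$ is in bijection with $V^{(\mathbf{z})}$ through $\Pi_n$, so $u$ is still a primitive element; and the new parametrization component $v_{n+1}(T) = g(v_1(T),\ldots, v_n(T)) \bmod Q(T)$ is correct by construction. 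Lemma~\ref{lem:projcrit} then identifies $W((g,\mathbf{a}),V)$ with $\Pi_n(W(\mathbf{a}'_{i+1}, V'))$, and because $X_{n+1}$ is determined by $g$ on $V'$ this projection is bijective and preserves degrees. Hence the geometric resolution returned by \textsc{PolarVar}, after the change of primitive element and the projection on the first $n$ coordinates, encodes $W((g,\mathbf{a}),V)$ with the prescribed primitive element.

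For the complexity, the cost of constructing $\mathscr{L}'$ is dominated by the modular composition computing $v_{n+1}$, which fits well within the claimed bound. The dominant cost is the call to \textsc{PolarVar}, which according to \cite{bank2013degeneracy} runs in $L'(nD_{\max})^{O(1)}\delta'^2$ operations in $\Q$, where $L' = L + O(1)$ is the SLP size of the extended system and $\delta'$ is the maximum degree among the intermediate classical polar varieties $W(\mathbf{a}'_k, V')$ for $k \in \{1,\ldots, d+1\}$. The final \textsc{ChangePrimitiveElement} step contributes only $\widetilde{O}(n\Delta^2)$ additional operations.

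\textbf{The main obstacle} is to bound $\delta'$ by $\Delta$. I would argue that each polar variety $W(\mathbf{a}'_k, V')$ projects bijectively under $\Pi_n$ onto the modified polar variety $W((g, \mathbf{a}_{k-1}), V)$: repeating the rank computation of Lemma~\ref{lem:projcrit} with $i = k-1$ shows that the Jacobian rank condition defining the former translates exactly into the modified rank condition defining the latter, and the relation $X_{n+1} = g(X_1,\ldots, X_n)$ makes $\Pi_n$ one-to-one on $V'$. Consequently $\deg(W(\mathbf{a}'_k, V')) = \deg(W((g, \mathbf{a}_{k-1}), V))$, and Theorem~\ref{thm:degreeModPolar} together with Proposition~\ref{prop:nongeneric} (whose hypotheses are part of the statement) bound each such degree by $\sum_{j=k-1}^d \delta_{j+1}(V)(D-1)^{j-(k-1)}$, which is in turn majorized by $\Delta$. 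Combining these estimates yields the announced complexity $(nD_{\max})^{O(1)}\widetilde{O}(L\Delta^2)$.
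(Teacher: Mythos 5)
Your proposal follows essentially the same route as the paper: build the lifting fiber $\mathscr L'$ for the graph variety $V'=\{(x,g(x))\mid x\in V\}$, invoke Lemma~\ref{lem:projcrit} to identify the classical polar varieties of $V'$ with the modified polar varieties of $V$, charge the dominant cost to \textsc{PolarVar} via \cite[Thm.~18]{bank2013degeneracy}, bound the relevant degrees through Theorem~\ref{thm:degreeModPolar} and Proposition~\ref{prop:nongeneric}, and account for \textsc{ChangePrimitiveElement}; indeed you spell out the verification of $\mathscr L'$ and the treatment of the intermediate polar varieties $W(\mathbf a'_k,V')$ more explicitly than the paper does.

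One assertion is not literally correct as stated: the injective projection $\Pi_n$ does not in general \emph{preserve} degrees of the positive-dimensional intermediate loci. Since $W(\mathbf a'_k,V')$ is the graph of $g$ over $W((g,\mathbf a_{k-1}),V)$, i.e.\ the intersection of the cylinder $W((g,\mathbf a_{k-1}),V)\times\C$ with the degree-$D$ hypersurface $X_{n+1}-g=0$, one only gets $\deg(W((g,\mathbf a_{k-1}),V))\le \deg(W(\mathbf a'_k,V'))\le D\,\deg(W((g,\mathbf a_{k-1}),V))$ (equality of degrees does hold for the zero-dimensional locus $k=1$, where degree is cardinality). This does not endanger the theorem: the extra factor at most $D\le D_{\max}$ in $\delta'$ contributes $D^2$ to the quadratic term and is absorbed into the $(nD_{\max})^{O(1)}$ factor, so your final bound $(nD_{\max})^{O(1)}\widetilde O(L\Delta^2)$ stands — the paper itself is silent on this point and implicitly makes the same identification.
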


\begin{proof}
  We prove first the correctness of the algorithm. Note that
  $\mathscr L'$ computed during Algorithm \ref{algo:georescritpoints}
  is a lifting fiber for $V' = \{( x, g(x))\mid x\in V\}$.  Assuming
  that \textsc{PolarVar} returns a lifting fiber for
  $W(\mathbf a'_1,V')$, Lemma~\ref{lem:projcrit} shows that the output
  of {\textsc{PolarVar}} is a lifting fiber of the pairs $(x,g(x))$
  for $x\in W(g, V)$. The last steps compute a geometric resolution of
  the projection on the $n$ first coordinates, which is $W(g, V)$.
  We prove now the complexity statement. The first step of
  Algorithm~\ref{algo:georescritpoints} does not cost any arithmetic
  operations. The second step requires $\widetilde O(L \deg(V))$ operations
  in $\Q$ for the modular composition using quasi-linear algorithms for multiplication and
  reduction. The evaluation of $g$ costs $L$ operations. The cost of
  the computation of $\mathbf a'$ is negligible. By
  \cite[Thm. 18]{bank2013degeneracy}, the call to \textsc{PolarVar}
  requires $L (p n d)^{O(1)} \delta'^2$, where $\delta'$ is the
  maximum of the degrees of the polar varieties of $V'$. By Lemma
  \ref{lem:projcrit}, the projection of $W(\mathbf a'_1, V')$ on the
  $n$ first coordinates is $W( g,V)$. By
  Theorem~\ref{thm:degreeModPolar}, Proposition~\ref{prop:nongeneric}
  and since $\deg(g)\geq 2$, we have $\deg(W( g, V))\leq \Delta$. Changing the
  primitive element costs $\widetilde O(n \Delta^2)$ by \cite[Lemma
  6]{giusti2001grobner}.  Finally, the linear algebra computations in
  the last step cost $O(n^2 \Delta)$ operations in $\Q$. Summing all
  these complexities proves the complexity statement. \end{proof}


\bibliographystyle{abbrv}
\bibliography{biblio}
\end{document}